\let\footnote=\endnote
\numberwithin{equation}{section}
\newtheorem{Theorem}{Theorem}[section]
\newtheorem{Definition}[Theorem]{Definition}
\newtheorem{Proposition}[Theorem]{Proposition}
\newtheorem{Assumption}[Theorem]{Assumption}
\newtheorem{Lemma}[Theorem]{Lemma}
\newtheorem{Remark}[Theorem]{Remark}
\newtheorem{Example}[Theorem]{Example}
\newenvironment{proof}[1][{\it Proof.}]{\begin{trivlist}
\item[\hskip \labelsep {\bfseries #1}]}{ \hfill
$\Box$\end{trivlist}\vskip -0.2 cm}
\newcommand{\R}{\mathbb{R}}
\newcommand{\E}{\mathcal{E}}
\def\esssup_#1{\underset{#1}{\mathrm{ess\,sup\, }}}
\def\essinf_#1{\underset{#1}{\mathrm{ess\,inf\, }}}
\def\argmax_#1{\underset{#1}{\mathrm{arg\,max\, }}}
\def\argmin_#1{\underset{#1}{\mathrm{arg\,min\, }}}
\def \reff#1{{\rm(\ref{#1})}}
\def\b1{\bf 1}
\def \A{\mathbb{A}}
\def \N{\mathbb{N}}
\def \R{\mathbb{R}}
\def \E{\mathbb{E}}
\def \A{{\cal A}}
\def \Ac{{\cal A}}
\def \Bc{{\cal B}}
\def \Cc{{\cal C}}
\def \Gc{{\cal G}}
\def \Ic{{\cal I}}
\def \Lc{{\cal L}}
\def \Pc{{\cal P}}
\def \Mc{{\cal M}}
\def \Sc{{\cal S}}
\def \Vc{{\cal V}}
\def \Wc{{\cal W}}
\def \Vc{{\cal V}}
\def \reff#1{{\rm(\ref{#1})}}
\def \beqs{\begin{eqnarray*}}
\def \enqs{\end{eqnarray*}}
\def \beq{\begin{eqnarray}}
\def \enq{\end{eqnarray}}
\begin{document}


\title{Graphon Mean-Field Control for Cooperative Multi-Agent Reinforcement Learning}

\author{Yuanquan Hu \thanks{huyq21@mails.tsinghua.edu.cn} \and Xiaoli Wei \thanks{xiaoli\_wei@sz.tsinghua.edu.cn} \and Junji Yan \thanks{yan-jj21@mails.tsinghua.edu.cn} \and Hengxi Zhang \thanks{zhanghx20@mails.tsinghua.edu.cn} }

\date{}






\maketitle 




\begin{abstract}

  The marriage between mean-field theory and reinforcement learning has shown a great capacity to solve large-scale control problems with homogeneous agents. To break the homogeneity restriction of mean-field theory, a recent interest is to introduce graphon  theory to the mean-field paradigm. In this paper, we propose a graphon mean-field control (GMFC) framework to approximate cooperative multi-agent reinforcement learning (MARL) with nonuniform interactions and show that the approximate order is of $\mathcal{O}(\frac{1}{\sqrt{N}})$, with $N$ the number of agents. 
  By discretizing the graphon index of GMFC, we further introduce a smaller class of GMFC called block GMFC, which is shown to well approximate cooperative MARL. 
  Our empirical studies on several examples demonstrate that our GMFC approach is comparable with the state-of-art MARL algorithms while enjoying better scalability. The code of our experiments is available at https://github.com/huyq/GMFC/.
\end{abstract}








\section{Introduction}

Multi-agent reinforcement learning (MARL) has found various applications in the field of transportation and simulating \cite{w2016multi, adler2002cooperative}, stock price analyzing and trading \cite{lee2002stock, lee2007multiagent}, wireless communication  networks \cite{1284411, choi2009distributed, cui2019multi}, and learning behaviors in social dilemmas \cite{leibo2017multi, hughes2018inequity, lerer2017maintaining}. MARL, however, becomes intractable due to the complex interactions among agents as the number of agents increases.

A recent tractable approach is a mean-field approach by considering MARL in the regime with a large number of homogeneous agents under weak interactions \cite{flyvbjerg1993mean}. According to the number of agents and  learning goals, there are three subtle types of mean-field theories for MARL. The first one is called mean-field MARL (MF-MARL), which refers to the empirical average of the states or actions of a {\it finite} population. For example, \cite{YLLZZ2018} proposes to approximate interactions within the population of agents by averaging the actions of the overall population or neighboring agents. \cite{Lietal2021} proposes a mean-field proximal policy optimization algorithm for a class of MARL with permutation invariance. The second one is called mean-field game (MFG), which describes the asymptotic limit of non-cooperative stochastic games as the number of agents goes to infinity \cite{lasry2007mean, HMC2006, carmona2013probabilistic}. 
 Recently, a rapidly growing literature studies MFG for noncooperative MARL either in a model-based way \cite{YMMS2010, CL2018, HS2019} or by a model-free approach \cite{GHXZ2019, SM2019, EPLGP2020, CK2021mfg, PBK2021}. The third one is called mean-field control (MFC), which is closely related to MFG yet different from MFG in terms of learning goals.
For cooperative MFC, the Bellman equation for the value function is defined on an enlarged space of probability measures, and MFC is always reformulated as a new Markov decision process (MDP) with continuous state-action space \cite{motte2022mean}. \cite{CLT2019} shows the existence of optimal policies for MFC in the form of mean-field MDP and adapts classical reinforcement learning (RL) methods to the mean-field setups.
\cite{GGWX2020} approximates MARL by a MFC approach, and proposes a model-free kernel-based Q-learning algorithm (MFC-K-Q) that enjoys a linear convergence rate and is independent of the number of agents.
\cite{PBK2021} presents a model-based RL algorithm M3-UCRL for MFC with a general regret bound. 
\cite{AFL2022} proposes a unified two-timescale learning framework for MFG and MFC by tuning the ratio of learning rates of $Q$ function and the population state distribution.

One restriction of the mean-field theory is that it eliminates the difference among agents and interactions between agents are assumed to be uniform. However, in many real world scenarios, strategic interactions between agents are not always uniform and rely on the relative positions of agents. To develop scalable learning algorithms for multi-agent systems with heterogeneous agents, one approach is to exploit the local network structure of agents \cite{QWL2020,LQHW2021}.
Another approach is to consider mean-field systems on large graphs and their asymptotic limits, which leads to graphon mean-field theory \cite{L2012}. 
So far, most existing works on graphon mean-field theory consider either diffusion processes without learning in continuous time or non-cooperative graphon mean-field game (GMFG) in discrete time. \cite{BCW2020} considers uncontrolled graphon mean-field systems in continuous time. ~\cite{Delarue2017} studies MFG on an  Erd\"os-R\'enyi graph. \cite{EW2014} studies the convergence of weighted empirical measures described by stochastic differential equations. 
\cite{GCN2020} studies propagation of chaos of weakly interacting particles on general graph sequences. \cite{CH2019} considers general GMFG and studies $\varepsilon$-Nash equilibria of the multi-agent system by a PDE approach in continuous time. \cite{LS2022} studies stochastic games on large graphs and their graphon limits. It shows that GMFG is viewed as a special case of MFG by viewing the label of agents as a component of the state process. \cite{GC2019a, GC2019b} study continuous-time cooperative graphon mean-field systems with linear dynamics.
On the other hand, \cite{CCGL2022} studies static finite-agent network games and their associated graphon games. \cite{VMV2021} provides a sequential decomposition algorithm to find Nash equilibria of discrete-time GMFG. \cite{CK2021} constructs a discrete-time learning GMFG framework to analyze approximate Nash equilibria for MARL with nonuniform interactions. However, little is focused on learning cooperative graphon mean-field systems in discrete time, except for \cite{MAAU2021, MAU2022} on particular forms of nonuniform interactions among agents. \cite{MAU2022} proves that when the reward is affine in the state distribution and action distribution, MARL with nonuniform interactions can still be approximated by classic MFC. 
\cite{MAAU2021} considers multi-class MARL, where agents belonging to the same class are homogeneous. 
In contrast, we consider a general discrete-time GMFC framework under which agents are allowed to interact non-uniformly on any network captured by a graphon.

\paragraph{Our Work}
In this work, we propose a general discrete-time GMFC framework to approximate cooperative MARL on large graphs by combining classic MFC and network games. 
Theoretically, we first show that GMFC can be reformulated as a new MDP with deterministic dynamics and infinite-dimensional state-action space, hence the Bellman equation for Q function is established on the space of probability measure ensembles. It shows that GMFC approximates cooperative MARL well in terms of both value function and optimal policies. The approximation error is at order $\mathcal{O}(1/\sqrt{N})$, where $N$ is the number of agents. Furthermore, instead of learning infinite-dimensional GMFC directly, we introduce a smaller class called block GMFC by discretizing the graphon index, which can be recast as a new MDP with deterministic dynamic and finite-dimensional continuous state-action space. We show that the optimal policy ensemble learned from block GMFC is near optimal for cooperative MARL. To deploy the policy ensemble in the finite-agent system, we directly sample from the action distribution in the blocks. Empirically, our experiments in Section \ref{experiment} demonstrate that when the number of agents becomes large, the mean episode reward of MARL becomes increasingly close to that of block GMFC, which verifies our theoretical findings. Furthermore, our block GMFC approach achieves comparable performances with other popular existing MARL algorithms in the finite-agent setting.

\paragraph{Outline} The rest of the paper is organized as follows. Section \ref{sec:MFMARL_dense_graph} recalls basic notations of graphons and introduces the setup of cooperative MARL with nonuniform interactions and its asymptotic limit called GMFC. Section \ref{sec:theoretical} connects cooperative MARL and GMFC, introduces block GMFC for efficient algorithm design, and builds its connection with cooperative MARL. The main theoretical proofs are presented in Section \ref{sec:proofs}. Section \ref{experiment} tests the performance of block GMFC experimentally. 

\section{Mean-Field MARL on Dense Graphs} \label{sec:MFMARL_dense_graph}

\subsection{Preliminary: Graphon Theory}

In the following, we consider a cooperative multi-agent system and its associated mean-field limit.
In this system, each agent is affected by all others, with different agents exerting different effects on her.
This multi-agent system with  $N$ agents can be described by a weighted graph $G_N=(\Vc_N, \mathcal{E}_N)$, where the vertex set $\Vc_N=\{1, \ldots, N\}$ and the edge set $\mathcal{E}_N$ represent agents and the interactions between agents, respectively. 
To study the limit of the multi-agent system as $N$ goes to infinity, we adopt the graphon theory introduced in \cite{L2012} used to characterize the limit behavior of dense graph sequences. Therefore, throughout the paper, we assume the graph $G_N$ is dense and leave sparse graphs for future study.

In general, a graphon is represented by a bounded symmetric measurable function $W:$ $\mathcal{I} \times \mathcal{I} \to \mathcal{I}$, with $\mathcal{I} = [0,1]$. 
We denote by $\Wc$ the space of all graphons and equip the space $\Wc$ with the cut norm $\|\cdot\|_{\square}$
\beqs
\|W\|_{\square} =\sup_{S, T \subset \mathcal{I}} \biggl|\int_{S \times T} W(\alpha, \beta) d\alpha d \beta\biggl|.
\enqs
It is worth noting that each weighted graph $G_N=(\Vc_N, \mathcal{E}_N)$ is uniquely determined by a step-graphon $W_N$
\beqs
W_N(\alpha, \beta) = W_N\Big(\frac{\lceil N\alpha \rceil}{N}, \frac{\lceil N\beta \rceil}{N}\Big).
\enqs
We assume that the sequence of $W_N$ converges to a graphon $W$ in cut norm as the number of agents $N$ goes to infinity, which is crucial for the convergence analysis of cooperative MARL in Section \ref{sec:theoretical}.
\begin{Assumption}\label{assm:WN}
The sequence $(W_N)_{N \in \N}$ converges in cut norm to some graphon $W \in \Wc$ such that
\beqs
\|W_N - W\|_{\square} \to 0.
\enqs
\end{Assumption}

Some common examples of graphons include 
\begin{enumerate}[label={{\upshape \arabic*)}}]
    \item Erd\H{o}s R\'enyi:  $W(\alpha, \beta) = p$, $0 \leq p \leq 1$, $\alpha, \beta \in \mathcal{I}$;
    \item Stochastic block model: 
    \beqs 
        W(\alpha, \beta) = \left\{\begin{array}{rll} p  & \mbox{if}\; 0 \leqslant \alpha, \beta \leqslant 0.5 \;\mbox{or}\; 0.5 \leqslant \alpha, \beta \leqslant 1,\\
        q  & \mbox{otherwise},
    \end{array}
    \right.
    \enqs 
    where $p$ represents the intra-community interaction and $q$ the inter-community interaction;
    \item Random geometric graphon: $W(\alpha, \beta) = f(\min(|\beta - \alpha|, 1 - |\beta - \alpha|))$, where $f: [0, 0.5] \to [0, 1]$ is a non-increasing function.
\end{enumerate}

\subsection{Cooperative MARL with Nonuniform Interactions}\label{subsec:local_setting}
In this section, we facilitate the analysis of MARL by considering a particular class of MARL with nonuniform interactions, where each agent interacts with all other agents via the aggregated weighted mean-field effect of the population of all agents. 

 Recall that we use the weighted graph $G_N=(\Vc_N, \mathcal{E}_N)$ to represent the multi-agent system, in which agents are cooperative and coordinated by a central controller. They share a finite state space $\Sc$ and take actions from a finite action space $\mathcal{A}$. We denote by $\Pc(\Sc)$ and $\Pc(\Ac)$ the space of all probability measures on $\Sc$ and $\Ac$, respectively. Furthermore, denote by $\Bc(\Sc)$  the space of all Borel measures on $\Sc$.

For each agent $i$, the {\it neighborhood empirical measure} is given by
\beq \label{equ:densegraph}
    \mu^{i,W_N}_t(\cdot): = {\frac{1}{N}} \sum_{j \in \Vc_N} \xi_{i, j}^N  \delta_{s_t^j}(\cdot),
    \enq
    where $\delta_{s_t^j}$ denotes Dirac measure at $s_t^j$, and $\xi_{i, j}^N$ describing the interaction between agents $i$ and $j$ is taken as either
    \begin{align} \label{C1}
    \xi_{ij}^N = W_N(\frac{i}{N}, \frac{j}{N}) \tag{C1}
    \end{align} or
    \begin{align} \label{C2}
    \xi_{i,j}^N  \sim \mbox{Bernoulli}(W_N(\frac{i}{N}, \frac{j}{N})). \tag{C2}
    \end{align} 

At each step $t=0,1, \cdots,$ if agent $i$, $i \in [N]$ at state $s^i_t \in \Sc$  takes an action $a^i_t \in \mathcal{A}$, then she will receive a reward
\vspace{-0.2cm}
\begin{eqnarray}\label{eq:N_agent_reward_heter}
r\Big(s^i_t, \,\,\mu_t^{i, W_N}, \,\,a^i_t\Big), \quad i \in [N],
\end{eqnarray}
where $r: \Sc \times \Bc(\Sc) \times \Ac \to \R$,
and she will change to a new state $s^i_{t + 1}$ according to a transition probability such that
\begin{eqnarray}\label{eq:N_agent_transition_heter}
s^i_{t + 1}\sim {P}\left.\Big(\cdot\,\right\vert\,s^i_t, \, \mu_t^{i, W_N}, \,\,a^i_t\Big), \quad i \in [N], \; s^i_0 \sim \mu \in \Pc(\Sc),
\end{eqnarray}
where $P: \Sc \times \Bc(\Sc) \times \Ac \to \Pc(\Sc)$.

\eqref{eq:N_agent_reward_heter}-\eqref{eq:N_agent_transition_heter} indicate that the reward and the transition probability of agent $i$ at time $t$ depend on both her individual information $(s_t^i, a_t^i)$ and neighborhood empirical measure $\mu^{i, W_N}_t$.

Furthermore, the policy is assumed to be stationary for simplicity and takes the Markovian form
\begin{eqnarray}\label{eq:N_agent_policy}
a^i_t \sim  \pi^i\left(\cdot|s^i_t\right)\in \mathcal{P}(\Ac), \quad i \in [N],
\end{eqnarray}
which maps agent $i$'s state to a randomized action. For each agent $i$, the space of all policies is denoted as $\Pi$.


\begin{Remark}
When $\xi_{ij}^N \equiv 1$, $i, j \in [N]$, it corresponds to classical mean-field theory with uniform interactions \cite{CLT2019,GGWX2020}. Furthermore, our framework is flexible enough to include the non-uniform interactions of actions via $\nu_t^{i, W_N}= {\frac{1}{N}} \sum_{j \in \Vc_N} \xi_{i, j}^N  \delta_{a_t^j}(\cdot)$, and also to include heterogeneity of agents by allowing $r$ and $P$ to rely on the agent types $i$.
\end{Remark}

The objective of the multi-agent system \reff{equ:densegraph}-\reff{eq:N_agent_policy} is to maximize the expected discounted accumulated reward averaged over all agents, i.e.,
\beq\label{eq:N_agent_reward}
V_N(\mu)&=&\sup_{(\pi^1, \ldots, \pi^N) \in \Pi^N}J_N(\mu, \pi_1, \ldots, \pi_N) \\
&:=& \sup_{(\pi^1, \ldots, \pi^N) \in \Pi^N}\frac{1}{N}\sum_{i=1}^N\mathbb{E} \left.\left[\sum_{t=0}^\infty \gamma^t {r\big(s^i_t, \,\,\mu^{i,W_N}_t, \,\,a^i_t\big)}\,\,\right|\,\,s_0^i \sim \mu, a_t^i \sim \pi^i(\cdot|s_t^i)\right], \nonumber
\enq
subject to \reff{equ:densegraph}-\eqref{eq:N_agent_policy}  with a discount factor $\gamma$ $\in$ $(0, 1)$.

\begin{Definition} An $\varepsilon$-Pareto optimality of cooperative MARL \reff{equ:densegraph}-\reff{eq:N_agent_reward} for $\varepsilon >0$ is defined as $(\pi^{1, *}, \ldots, \pi^{N, *}) \in \Pi^N$ such that
\beq
J_N(\mu, \pi_1^*, \ldots, \pi_N^*) \geq  \sup_{(\pi^1, \ldots, \pi^N) \in \Pi^N} J_N(\mu, \pi_1, \ldots, \pi_N)- \varepsilon.
\enq

\end{Definition}
\subsection{Graphon Mean-Field Control}
We expect the cooperative MARL \reff{equ:densegraph}-\reff{eq:N_agent_reward} to become a GMFC problem as $N \to \infty$.  In GMFC, there is a continuum of agents $\alpha \in \mathcal{I}$, and each agent with the index/label $\alpha \in \mathcal{I}$ follows
\beq \label{equ:GMFC}
s_0^\alpha \sim \mu^\alpha, \;\; a_t^\alpha \sim \pi^\alpha(\cdot|s_t^\alpha),\;\; s_{t + 1}^\alpha \sim P(\cdot|s_t^\alpha, \mu_t^{\alpha, W}, a_t^\alpha),
\enq
where $\mu_t^\alpha$ $= \Lc(s_t^\alpha)$, $\alpha \in \mathcal{I}$ denotes the probability distribution of $s_t^\alpha$, and $\mu_t^{\alpha, W}$ is defined as the {\it neighborhood mean-field measure} of agent $\alpha$:
\beq \label{notation:weighted_neighborhood}
\mu_t^{\alpha, W}= \int_{\mathcal{I}} W(\alpha, \beta) \mu_t^\beta d \beta \in \Bc(\Sc),
\enq
with the graphon $W$ given in Assumption \ref{assm:WN}.

To ease the sequel analysis, define the space of state distribution ensembles ${\pmb \Mc}: = \Pc(\Sc)^{\mathcal{I}}:=\{f:\mathcal{I} \to \Pc(\Sc)\}$ and the space of policy ensembles ${\pmb \Pi}: = \Pc(\Ac)^{\Sc \times \mathcal{I}}$. Then ${\pmb \mu}: = (\mu^\alpha)_{\alpha \in \mathcal{I}}$ and ${\pmb \pi}: =(\pi^\alpha)_{\alpha \in \mathcal{I}}$ are elements in ${\pmb\Mc}$ and ${\pmb \Pi}$, respectively.

The objective of GMFC is to maximize the expected discounted accumulated reward averaged over all agents $\alpha \in \mathcal{I}$
\beq\label{eq:GMFC_reward}
V({\pmb \mu}):&=&\sup_{{\pmb \pi} \in {\pmb \Pi}}J({\pmb \mu}, {\pmb \pi})\\
&=& \sup_{{\pmb \pi} \in {\pmb \Pi}}\int_{\mathcal{I}} \E\left.\left[\sum_{t=0}^\infty \gamma^t {r\big(s^\alpha_t, \,\,{\mu^{\alpha,W}_t}, \,\,a^\alpha_t\big)}\,\,\right|\,\,s_0^\alpha \sim \mu^\alpha, a_t^\alpha \sim \pi^\alpha(\cdot|s_t^\alpha)\right] d\alpha. \nonumber
\enq

\section{Main Results} \label{sec:theoretical}

\subsection{Reformulation of GMFC}
In this section, we show that GMFC \reff{equ:GMFC}-\reff{eq:GMFC_reward} can be reformulated as a MDP with deterministic dynamics and continuous state-action space ${\pmb \Mc}$ $\times$ ${\pmb \Pi}$.

\begin{Theorem} \label{thm:GMFC_reformulate} GMFC \reff{equ:GMFC}-\reff{eq:GMFC_reward} can be reformulated as
\beq\label{GMFC:valuefunction_reformulate}
V({\pmb \mu}) = \sup_{{\pmb \pi} \in {\pmb \Pi}} \sum_{t=0}^\infty\gamma^t R({\pmb \mu}_t, {\pmb \pi}),
\enq
subject to
\beq\label{GMFC:dynamics_reformulate}
\mu_{t+1}^\alpha(\cdot) = {\pmb \Phi}^\alpha({\pmb \mu}_t, {\pmb \pi})(\cdot), \; t \in \N,\; \mu_0^\alpha = \mu^\alpha, \;\alpha \in \mathcal{I},
\enq
where the aggregated reward $R: {\pmb \Mc} \times {\pmb \Pi} \to \R$ and the aggregated transition dynamics ${\pmb \Phi}: {\pmb \Mc} \times {\pmb \Pi} \to {\pmb \Mc}$ are given by
\beq \label{GMFC:aggregated_reward}
R ({\pmb \mu}, {\pmb \pi}) = \int_{\mathcal{I}} \sum_{s \in \Sc} \sum_{a \in \Ac} r(s, a, \mu^{\alpha, W}) \pi^\alpha(a|s) \mu^\alpha(s)d\alpha,\\
{\pmb \Phi}^\alpha ({\pmb \mu}, {\pmb \pi})(\cdot) = \sum_{s\in \Sc}\sum_{a \in \Ac} P(\cdot|s, \mu^{\alpha, W}, a)\pi^\alpha(a|s) \mu^\alpha(s). \label{GFC:aggregated_dynamics}
\enq
\end{Theorem}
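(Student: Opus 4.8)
The plan is to fix an arbitrary policy ensemble ${\pmb \pi} \in {\pmb \Pi}$ and show that the objective $J({\pmb \mu}, {\pmb \pi})$ in \reff{eq:GMFC_reward} coincides with $\sum_{t=0}^\infty \gamma^t R({\pmb \mu}_t, {\pmb \pi})$ evaluated along the deterministic flow \reff{GMFC:dynamics_reformulate}; since the two objectives agree for every fixed ${\pmb \pi}$, taking the supremum over ${\pmb \pi}$ then yields the claimed identity \reff{GMFC:valuefunction_reformulate}. The conceptual point driving the whole argument is that, unlike the finite-$N$ neighborhood empirical measure \reff{equ:densegraph} which is random, the neighborhood mean-field measure $\mu_t^{\alpha, W} = \int_{\mathcal{I}} W(\alpha, \beta)\mu_t^\beta d\beta$ is a \emph{deterministic} functional of the distribution ensemble ${\pmb \mu}_t = (\mu_t^\beta)_{\beta \in \mathcal{I}}$, so in \reff{equ:GMFC} the only randomness left is that of the individual trajectory $(s_t^\alpha, a_t^\alpha)$.

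First I would derive the evolution of the law flow. Fixing $\alpha$ and $t$, the law of total probability applied to \reff{equ:GMFC} gives, for each $s' \in \Sc$,
\beqs
\mu_{t+1}^\alpha(s') = \sum_{s \in \Sc}\sum_{a \in \Ac} \P(s_{t+1}^\alpha = s' \mid s_t^\alpha = s, a_t^\alpha = a)\,\P(a_t^\alpha = a \mid s_t^\alpha = s)\,\P(s_t^\alpha = s).
\enqs
Substituting $\P(s_{t+1}^\alpha = s' \mid s_t^\alpha = s, a_t^\alpha = a) = P(s' \mid s, \mu_t^{\alpha, W}, a)$ (legitimate precisely because $\mu_t^{\alpha, W}$ is deterministic), $\P(a_t^\alpha = a \mid s_t^\alpha = s) = \pi^\alpha(a|s)$, and $\P(s_t^\alpha = s) = \mu_t^\alpha(s)$ identifies the right-hand side with ${\pmb \Phi}^\alpha({\pmb \mu}_t, {\pmb \pi})(s')$ as defined in \reff{GFC:aggregated_dynamics}. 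An induction on $t$ from $\mu_0^\alpha = \mu^\alpha$ then shows the entire flow is a deterministic function of ${\pmb \pi}$ and ${\pmb \mu}$, which is exactly \reff{GMFC:dynamics_reformulate}.

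Next I would handle the reward. Because $\mu_t^{\alpha, W}$ is deterministic and the joint law of $(s_t^\alpha, a_t^\alpha)$ factorizes as $\mu_t^\alpha(s)\pi^\alpha(a|s)$, the per-step expectation satisfies
\beqs
\E\big[r(s_t^\alpha, \mu_t^{\alpha, W}, a_t^\alpha)\big] = \sum_{s \in \Sc}\sum_{a \in \Ac} r(s, \mu_t^{\alpha, W}, a)\,\pi^\alpha(a|s)\,\mu_t^\alpha(s),
\enqs
and integrating over $\alpha \in \mathcal{I}$ reproduces exactly the integrand of $R({\pmb \mu}_t, {\pmb \pi})$ in \reff{GMFC:aggregated_reward} (up to the harmless reordering of the arguments of $r$). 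Since $\Sc$ and $\Ac$ are finite and the neighborhood measures $\mu_t^{\alpha,W}$ range over a bounded subset of $\Bc(\Sc)$, the reward is bounded, and with $\gamma \in (0,1)$ the double series $\sum_{t} \gamma^t \int_{\mathcal{I}} \E[\,\cdot\,]\,d\alpha$ converges absolutely; Fubini--Tonelli then lets me interchange the integral over $\alpha$ with the sum over $t$, giving $J({\pmb \mu}, {\pmb \pi}) = \sum_{t=0}^\infty \gamma^t R({\pmb \mu}_t, {\pmb \pi})$.

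The step demanding the most care is not any single algebraic manipulation but the bookkeeping that keeps every object well defined across the continuum of labels. I would verify that ${\pmb \Phi}$ maps ${\pmb \Mc}$ into itself, i.e.\ that $\alpha \mapsto \mu_{t+1}^\alpha$ inherits measurability from $\alpha \mapsto \mu_t^\alpha$; this relies on the measurability of $\alpha \mapsto \mu_t^{\alpha, W}$, which in turn follows from $W$ being bounded and measurable so that the integral \reff{notation:weighted_neighborhood} is well posed and depends measurably on $\alpha$. Establishing this invariance by induction guarantees ${\pmb \mu}_t \in {\pmb \Mc}$ for all $t$, that the outer integrals over $\mathcal{I}$ in both $R$ and $J$ are meaningful, and hence that the reformulated deterministic MDP \reff{GMFC:valuefunction_reformulate}--\reff{GFC:aggregated_dynamics} is genuinely well posed.
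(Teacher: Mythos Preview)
Your proposal is correct and follows exactly the standard argument one expects here: fix ${\pmb \pi}$, observe that the neighborhood mean-field measure $\mu_t^{\alpha,W}$ is deterministic so that the law of $s_{t+1}^\alpha$ is obtained from that of $(s_t^\alpha,a_t^\alpha)$ by the law of total probability, compute the per-step expected reward as a sum against the joint law $\mu_t^\alpha(s)\pi^\alpha(a|s)$, and interchange $\sum_t$ with $\int_{\mathcal I}$ by boundedness and $\gamma\in(0,1)$. The paper in fact omits the proof entirely, merely pointing to Lemma~2.2 of \cite{GGWX2019} as the analogous computation in the classical MFC setting; your write-up supplies precisely those details (and even adds the measurability bookkeeping the paper does not mention).
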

The proof of Theorem \ref{thm:GMFC_reformulate} is similar to the proof of Lemma 2.2 in \cite{GGWX2019}. So we omit it here. 

\reff{GFC:aggregated_dynamics} and \reff{GMFC:dynamics_reformulate} indicate the evolution of the state distribution ensemble ${\pmb \mu}_t$ over time. 
That is, under the fixed policy ensemble ${\pmb \pi}$, the state distribution $\mu_{t+1}^\alpha$ of agent $\alpha$ at time $t + 1$ is fully determined by the policy ensemble ${\pmb \pi}$ and the state distribution ensemble ${\pmb \mu}_t$ at time $t$. Note that the state distribution of each agent $\alpha$ is fully coupled with state distributions of the population of all agents via the graphon $W$.

With the reformulation in Theorem \ref{thm:GMFC_reformulate}, the associated $Q$ function starting from $({\pmb \mu}, {\pmb \pi}) \in {\pmb \Mc} \times {\pmb \Pi}$ is defined as
\beq
 Q({\pmb \mu}, {\pmb \pi}) &=& R({\pmb \mu}, {\pmb \pi}) + \sup_{{\pmb \pi}' \in {\pmb\Pi}} \Big[\sum_{t=1}^\infty \gamma^t R\big({\pmb \mu}_t, {\pmb \pi}'\big)\,\, \Big|\,\, s^{\alpha}_0 \sim {\mu}^{\alpha},  a_0^{\alpha} \sim \pi^{\alpha}(\cdot|s_0^{\alpha})\Big].
\enq
Hence its Bellman equation is given by
\beq\label{GMFC:Q_function}
Q({\pmb \mu}, {\pmb \pi}) = R({\pmb \mu}, {\pmb \pi}) + \gamma \sup_{{\pmb \pi}' \in {\pmb \Pi}} Q({\pmb \Phi}(\pmb \mu, \pmb \pi), {\pmb \pi}').
\enq

\begin{Remark}{(Label-state formulation)}  GMFC \reff{equ:GMFC}-\reff{eq:GMFC_reward} can be viewed as a classical MFC with extended state space $\Sc \times \mathcal{I}$, action space $\A$, policy $\tilde\pi \in \Pc(\A)^{\Sc \times \mathcal{I}}$, mean-field information $\tilde\mu \in \Pc(\Sc \times \mathcal{I})$, reward $\tilde r((s, \alpha),  \tilde\mu, a): = r(s, \int_{\mathcal{I}} W(\alpha, \beta)\tilde\mu(\cdot, \beta)d\beta, a)$, transition dynamics of $(\tilde s_t, \alpha_t)$ such that
\beqs
\tilde s_{t + 1} \sim P(\cdot|\tilde s_t, \tilde a_t, \int_{\mathcal{I}}W(\alpha_t, \beta)\tilde\mu_t(\cdot, \beta)d\beta),\; \alpha_{t+1}=\alpha_t, \; \tilde a_t \sim \tilde\pi(\cdot|\tilde s_t, \alpha_t), \; \tilde s_0 \sim \mu_0, \;\tilde \alpha_0 \sim Unif(0, 1).
\enqs
It is worth pointing out such a label-state formulation has also been studied in GMFG \cite{LS2022, CK2021}.
\end{Remark}

\subsection{Approximation}
In this section, we show that GMFC \reff{equ:GMFC}-\reff{eq:GMFC_reward} provides a good approximation for the cooperative multi-agent system \reff{equ:densegraph}-\reff{eq:N_agent_reward} in terms of the value function and the optimal policy ensemble. To do this, the following assumptions on $W$, $P$, $r$, and ${\pmb \pi}$ are needed.

\begin{Assumption}[graphon $W$] \label{assm:W} There exists $L_W >0$ such that for all $\alpha, \alpha', \beta, \beta' \in \mathcal{I}$
\begin{eqnarray*}
|W(\alpha, \beta) - W(\alpha', \beta')| \leq L_W \cdot \Big(|\alpha - \alpha'| + |\beta -\beta'|\Big).
\end{eqnarray*}
\end{Assumption}
Assumption \ref{assm:W} is common in graphon mean-field theory \cite{GC2019a, CK2021, LS2022}. Indeed, the Lipschitz continuity assumption on $W$ in Assumption \ref{assm:W} can be relaxed to piecewise Lipschitz continuity on $W$.

\begin{Assumption}[transition probability $P$] \label{assm:P} There exists $L_P >0$ such that for all $s\in\Sc, a\in\Ac$, $\mu_1,\mu_2\in\Bc(\Sc)$
\begin{eqnarray*}
\|P(\cdot| s, \mu_1, a) - P(\cdot| s, \mu_2, a)\|_1 \leq L_P \cdot \|\mu_1 - \mu_2\|_1,
\end{eqnarray*}
where $\|\cdot\|_1$ denotes $L^1$ norm here and throughout the paper.

\end{Assumption}

\begin{Assumption}[reward $r$] \label{assm:r}
 There exist $M_r>0$ and $L_{r}>0$ such that for all $s\in\Sc, a\in\Ac$, $\mu_1,\mu_2\in\Bc(\Sc)$,
    \begin{eqnarray*}
    |{r}(s,\mu, a)|\leq M_r, \; |{r}(s,\mu_1, a)- r(s,\mu_2,a)|\leq L_{r}\cdot||\mu_1-\mu_2||_1.
    \end{eqnarray*}
\end{Assumption}

\begin{Assumption}[policy ${\pmb \pi}$] \label{assm:pi} There exists $L_{\pmb \Pi} >0$ such that for any policy ensemble ${\pmb \pi}: = (\pi^\alpha)_{\alpha \in \mathcal{I}} \in {\pmb \Pi}$ is Lipschitz continuous, i.e.
\beqs
\max_{s \in \Sc}\|\pi^\alpha(\cdot|s) - \pi^\beta(\cdot|s)\|_1 \leq L_{\pmb \Pi}|\alpha -\beta|.
\enqs
\end{Assumption}

Assumptions \ref{assm:P}-\ref{assm:pi} are standard and commonly used to bridge the multi-agent system and mean-field theory.

To show approximation properties of GMFC in the large-scale multi-agent system, we need to relate policy ensembles of GMFC to policies of the multi-agent system. On one hand, one can see that any ${\pmb \pi} \in {\pmb \Pi}$ leads to a $N$-agent policy tuple $(\pi^1, \ldots, \pi^N) \in \Pi^N$ with
\beq \label{equ_relation_pmbpi_piN} 
\Gamma^N: {\pmb \Pi} \ni {\pmb \pi} \mapsto (\pi^1, \ldots, \pi^N) \in \Pi^N, \;\;\; \mbox{with}\; \pi^i: = {\pmb \pi}^{\frac{i}{N}}.
\enq
On the other hand, any $N$-agent policy tuple $(\pi^1, \ldots, \pi^N) \in \Pi^N$ can be seen as a step policy ensemble ${\pmb \pi}^N$ in ${\pmb \Pi}$:
\beq \label{equ:relation_piN_pmbpi}
{\pmb \pi}^{N, \alpha} := \sum_{i =1}^N \pi^i {\bf 1}_{\alpha \in (\frac{i-1}{N}, \frac{i}{N}]} \in {\pmb \Pi}.
\enq
\begin{Theorem}[Approximate Pareto Property] \label{thm:GMFC_approximate_pareto_property}
Assume Assumptions \ref{assm:WN}, \ref{assm:W}, \ref{assm:P}, \ref{assm:r} and \ref{assm:pi}. Then under either the condition \reff{C1} or \reff{C2}, we have for any initial distribution $\mu \in \Pc(\Sc)$
\beq
|V_N(\mu) - V({\mu})| \to 0, \;\; \text{as}\; N \to \infty.
\enq
Moreover, if the graphon convergence in Assumption \ref{assm:WN} is at rate $\mathcal{O}(\frac{1}{\sqrt{N}})$, then $|V_N(\mu) - V({\mu})| = \mathcal{O}(\frac{1}{\sqrt{N}})$.
As a consequence, for any $\varepsilon >0$, there exists an integer $N_\varepsilon$ such that when $N \geq N_\varepsilon$, the optimal policy ensemble of GMFC denoted as ${\pmb \pi^*}$ (if it exists) provides an $\varepsilon$-Pareto optimality $(\pi^{1, *}, \ldots, \pi^{N, *}): = \Gamma^N({\pmb \pi}^*)$ for the multi-agent system \reff{eq:N_agent_reward}, with $\Gamma^N$ defined in \reff{equ_relation_pmbpi_piN}.
\end{Theorem}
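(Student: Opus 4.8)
The plan is to obtain both the convergence and the rate from one \emph{per-policy coupling estimate} between the finite system \reff{equ:densegraph}--\reff{eq:N_agent_policy} and the GMFC flow \reff{equ:GMFC}, and then pass to the suprema. Fix a (Lipschitz) ensemble ${\pmb\pi}\in{\pmb\Pi}$, drive the $N$ agents by $(\pi^1,\dots,\pi^N)=\Gamma^N({\pmb\pi})$ from \reff{equ_relation_pmbpi_piN}, and track the marginal-law error $\delta_t:=\max_{1\le i\le N}\|\Lc(s^i_t)-\mu^{i/N}_t\|_1$, where $\mu^{i/N}_t$ is the GMFC marginal at label $i/N$. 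Since $s^i_0\sim\mu=\mu^{i/N}_0$ we have $\delta_0=0$. The core is a bound on the expected neighborhood error $\E\|\mu^{i,W_N}_t-\mu^{i/N,W}_t\|_1$, obtained by writing $\mu^{i,W_N}_t-\mu^{i/N,W}_t$ as the sum of (i) a \emph{fluctuation} term $\frac1N\sum_j\xi^N_{ij}\big(\delta_{s^j_t}-\Lc(s^j_t)\big)$; (ii) a \emph{propagated-law} term $\frac1N\sum_j\xi^N_{ij}\big(\Lc(s^j_t)-\mu^{j/N}_t\big)$, bounded by $\delta_t$ because $\tfrac1N\sum_j\xi^N_{ij}\le1$; (iii) a \emph{graphon} term $\frac1N\sum_j(\xi^N_{ij}-W(\tfrac iN,\tfrac jN))\mu^{j/N}_t$, governed by $\|W_N-W\|_\square$ (Assumption \ref{assm:WN}, plus the Bernoulli mean under \reff{C2}); and (iv) a \emph{Riemann-sum} term comparing $\tfrac1N\sum_jW(\tfrac iN,\tfrac jN)\mu^{j/N}_t$ with $\int_{\mathcal I}W(\tfrac iN,\beta)\mu^\beta_t\,d\beta=\mu^{i/N,W}_t$, which is $\mathcal O(1/N)$ by the Lipschitzness of $W$ (Assumption \ref{assm:W}).

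The main obstacle is the fluctuation term (i): the states $(s^j_t)_j$ are \emph{not} independent, so a direct variance bound is unavailable. I would handle it by conditioning on the history $\cF_{t-1}$, under which the transitions \reff{eq:N_agent_transition_heter} make the $s^j_t$ conditionally independent; then $\frac1N\sum_j\xi^N_{ij}(\delta_{s^j_t}-\Lc(s^j_t\mid\cF_{t-1}))$ is a sum of conditionally independent, bounded, mean-zero terms with conditional variance $\mathcal O(1/N)$, so that taking expectations and using $|\Sc|<\infty$ yields an $\mathcal O(1/\sqrt N)$ bound; under \reff{C2} the independent Bernoulli edges add a further $\mathcal O(1/\sqrt N)$ term by the same device. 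Writing $\eta_N:=\|W_N-W\|_\square+1/\sqrt N+1/N$, the four pieces give $\E\|\mu^{i,W_N}_t-\mu^{i/N,W}_t\|_1\le\delta_t+\mathcal O(\eta_N)$. Feeding this through the one-step GMFC map of Theorem \ref{thm:GMFC_reformulate} and using that $\mu\mapsto\sum_{s,a}P(\cdot\mid s,m,a)\pi(a\mid s)\mu(s)$ is an $L^1$-nonexpansion in $\mu$ while $L_P$-Lipschitz in $m$ (Assumption \ref{assm:P}), and that $\Gamma^N$ makes $\pi^i=\pmb\pi^{i/N}$ so no policy-discretization error arises, I obtain the recursion $\delta_{t+1}\le(1+L_P)\delta_t+\mathcal O(\eta_N)$ with $\delta_0=0$.

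Summation is the last delicate point. Because $|r|\le M_r$ and $r$ is $L_r$-Lipschitz in the measure (Assumption \ref{assm:r}), telescoping the two reward streams bounds $|J_N(\mu,\Gamma^N({\pmb\pi}))-J({\pmb\mu},{\pmb\pi})|$ by $C\sum_{t\ge0}\gamma^t(\delta_t+\eta_N)$ with $\delta_t\lesssim(1+L_P)^t\eta_N$; this series is $\mathcal O(\eta_N)$ under a contraction regime $\gamma(1+L_P)<1$, while for general parameters convergence to zero still follows by truncating the horizon at a fixed $T$ (paying an $\mathcal O(\gamma^T)$ tail), letting $N\to\infty$, and then $T\to\infty$. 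Taking suprema gives the theorem in two directions: for $V(\mu)\le V_N(\mu)+\mathcal O(\eta_N)$, apply the estimate to an $\varepsilon'$-optimal Lipschitz ${\pmb\pi}^*$ and note $\Gamma^N({\pmb\pi}^*)$ is admissible for the $N$-agent problem; for $V_N(\mu)\le V(\mu)+\mathcal O(\eta_N)$, take an $\varepsilon'$-optimal tuple $(\pi^i)$, embed it as the step ensemble ${\pmb\pi}^N$ of \reff{equ:relation_piN_pmbpi} (so $\Gamma^N({\pmb\pi}^N)=(\pi^i)$ exactly and the Riemann term still needs only $W$-Lipschitzness), obtaining $J_N(\mu,(\pi^i))\le J({\pmb\mu},{\pmb\pi}^N)+\mathcal O(\eta_N)\le V(\mu)+\mathcal O(\eta_N)$, where the last inequality holds directly if the GMFC supremum ranges over all measurable ensembles, or otherwise after mollifying ${\pmb\pi}^N$ into a Lipschitz competitor whose boundary correction vanishes with the mollification width. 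When $\|W_N-W\|_\square=\mathcal O(1/\sqrt N)$ all terms are $\mathcal O(1/\sqrt N)$, proving $|V_N(\mu)-V(\mu)|=\mathcal O(1/\sqrt N)$. Finally, the $\varepsilon$-Pareto claim is immediate: combining $J_N(\mu,\Gamma^N({\pmb\pi}^*))\ge V(\mu)-\mathcal O(\eta_N)$ with $V_N(\mu)\le V(\mu)+\mathcal O(\eta_N)$ shows $\Gamma^N({\pmb\pi}^*)$ is within $\mathcal O(\eta_N)$ of $V_N(\mu)$, hence $\varepsilon$-Pareto optimal once $N\ge N_\varepsilon$.
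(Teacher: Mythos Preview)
Your skeleton matches the paper's: both decompose the neighborhood-measure error (your (i)--(iv), the paper's Lemma~\ref{lemma:appendix_mu_error}) and run an inductive marginal-law comparison (your $\delta_t$-recursion, the paper's Lemma~\ref{lemma:appendix_state_error}); the conditional-independence device you propose for term~(i) is what the paper invokes via \cite{CK2021} and \cite{W2019}.

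The genuine gap is the choice $\delta_t=\max_{i}\|\Lc(s^i_t)-\mu^{i/N}_t\|_1$. Cut-norm convergence in Assumption~\ref{assm:WN} is equivalent to convergence in the operator norm
\[
\|W_N-W\|_{L^\infty\to L^1}=\sup_{\|g\|_\infty\le1}\int_{\mathcal I}\Bigl|\int_{\mathcal I}(W_N-W)(\alpha,\beta)\,g(\beta)\,d\beta\Bigr|\,d\alpha,
\]
which controls the \emph{label-integrated} quantity but \emph{not} $\sup_\alpha\bigl|\int_{\mathcal I}(W_N-W)(\alpha,\beta)g(\beta)\,d\beta\bigr|$. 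Hence your term~(iii) cannot be bounded by $\|W_N-W\|_\square$ uniformly in $i$: a step graphon $W_N$ may differ from $W$ by order one on a single strip $((i-1)/N,i/N]\times\mathcal I$ while $\|W_N-W\|_\square=\mathcal O(1/N)$, and then $\max_i$ blows up. The paper circumvents this by tracking the averaged error $\sum_i\int_{((i-1)/N,\,i/N]}\E\|\mu^{i,W_N}_t-\mu^{\alpha,W}_t\|_1\,d\alpha$ throughout, which is precisely what the $L^\infty\!\to\!L^1$ characterization controls (see the $I_2$-estimate in the proof of Lemma~\ref{lemma:appendix_mu_error}). If you replace $\delta_t$ by its label-averaged version, your term~(ii) still goes through (average over $i$ first and use $\tfrac1N\sum_i\xi^N_{ij}\le1$), and the rest of your recursion becomes the paper's induction.

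One smaller point: for the direction $V_N\le V+o(1)$ you embed an $\varepsilon'$-optimal $N$-tuple as the step ensemble $\pmb\pi^N$ of \reff{equ:relation_piN_pmbpi}, which violates the fixed Lipschitz constant $L_{\pmb\Pi}$ of Assumption~\ref{assm:pi}; mollifying to a competitor with \emph{that} prescribed constant is not automatic. The paper also invokes \reff{equ:relation_piN_pmbpi} at this step without further comment, so you are not worse off than the paper here, but be aware the point is delicate.
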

Directly learning Q function of GMFC in \reff{GMFC:Q_function} will lead to high complexity. Instead, we will introduce a smaller class of GMFC with a lower dimension in the next section, which enables a scalable algorithm.

\subsection{Algorithm Design} \label{sec:algorithm}
This section will show that discretizing the graphon index $\alpha \in \mathcal{I}$ of GMFC enables to approximate Q function in \reff{GMFC:Q_function} by an approximated Q function in \reff{discretized GMFC: Q_function} below defined on a smaller space, which is critical for designing efficient learning algorithms.

Precisely,  we choose uniform grids $\alpha_m \in \mathcal{I}_M:=\{\frac{m}{M}, 0 \leq m \leq M\}$ for simplicity, and approximate each agent $\alpha \in \mathcal{I}$ by the nearest $\alpha_m \in \mathcal{I}_M$ close to it. Introduce $\widetilde{\pmb \Mc}_M:= \Pc(\Sc)^{\mathcal{I}_M}$, $\widetilde{\pmb \Pi}_M: =\Pc(\Ac)^{\Sc \times \Ic_M}$. Meanwhile, $\tilde{\pmb \mu}:=(\tilde\mu^{\alpha_m})_{m \in [M]} \in \widetilde{\pmb \Mc}_M$ and $\tilde{\pmb \pi}:= (\tilde\pi^{\alpha_m})_{m \in [M]} \in \widetilde{\pmb \Pi}_M$ can be viewed as a piecewise constant state distribution ensemble in ${\pmb \Mc}$ and a piecewise constant policy ensemble in ${\pmb \Pi}$, respectively.  Our arguments can be easily generalized to nonuniform grids.

Consequently, instead of performing algorithms according to \reff{GMFC:Q_function} with a continuum of graphon labels directly, we work with GMFC with  $M$ blocks called {\bf block GMFC}, in which agents in the same block are homogeneous. The Bellman equation for Q function of block GMFC is given by
\beq \label{discretized GMFC: Q_function}
\widetilde Q(\tilde {\pmb \mu}, \tilde{\pmb \pi}) = \widetilde R(\tilde{\pmb \mu}, \tilde{\pmb \pi}) + \gamma \sup_{\widetilde {\pmb \pi}' \in \widetilde{\pmb \Pi}_M} \widetilde Q(\widetilde{\pmb \Phi}(\tilde{\pmb \mu}, \tilde{\pmb \pi}), \tilde{\pmb \pi}'),
\enq
where the neighborhood mean-field measure, the aggregated reward $\widetilde R: \widetilde{\pmb \Mc}_M \times  \widetilde{\pmb \Pi}_M \to \R$ and the aggregated transition dynamics $\widetilde {\pmb \Phi}: \widetilde{\pmb \Mc}_M \times  \widetilde{\pmb \Pi}_M \to \widetilde{\pmb \Mc}_M$ are given by
\beq \label{notation:discretized_GMFC}
\tilde \mu^{\alpha_{m}, W}&=&\frac{1}{M}\sum_{m'=0}^{M-1} W(\alpha_{m}, \alpha_{m'})\tilde\mu^{\alpha_{m'}}, m \in [M],\\
\widetilde R(\tilde{\pmb \mu}, \tilde{\pmb \pi})&=&\frac{1}{M}\sum_{m=0}^{M-1}\sum_{s \in \Sc} \sum_{a \in \Ac} r(s, a, \tilde\mu^{\alpha_m, W}) \tilde\mu^{\alpha_m}(s) \tilde\pi^{\alpha_m}(a|s),\\
\widetilde {\pmb \Phi}^{\alpha_m}(\tilde{\pmb \mu}, \tilde{\pmb \pi})(\cdot)&=&\sum_{s \in \Sc} \sum_{a \in \Ac} P(\cdot|s, a, \tilde\mu^{\alpha_m, W}) \tilde\mu^{\alpha_m}(s) \tilde\pi^{\alpha_m}(a|s). \label{notation:discretized_GMFC_Phi}
\enq

We see from \reff{discretized GMFC: Q_function} that block GMFC is a MDP with deterministic dynamics $\widetilde{\pmb \Phi}$ and continuous state-action space $\widetilde{\pmb \Mc}_M \times  \widetilde{\pmb \Pi}_M$.
The following Theorem shows that there exists an optimal policy ensemble of block GMFC in $\widetilde {\pmb \Pi}_M$.
\begin{Theorem}[Existence of Optimal Policy Ensemble] \label{thm:GMFC_existence_pareto_optimality} Given Assumptions \ref{assm:P}, \ref{assm:r}, assume $\gamma \cdot (L_P +1) < \infty$, then for any fixed integer $M >0$, there exists an $\tilde {\pmb \pi}^*$ $\in$ $\widetilde {\pmb \Pi}_M$ that maximize $\widetilde Q(\tilde{\pmb \mu}, \tilde {\pmb \pi})$ in \reff{discretized GMFC: Q_function} for any $\tilde {\pmb \mu} \in \widetilde {\pmb \Mc}_M$.
\end{Theorem}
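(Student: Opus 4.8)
The plan is to view block GMFC as a discounted infinite-horizon MDP whose ``state'' is the distribution ensemble $\tilde{\pmb\mu}\in\widetilde{\pmb\Mc}_M$ and whose ``action'' is the policy ensemble $\tilde{\pmb\pi}\in\widetilde{\pmb\Pi}_M$, and then to prove that (i) the $Q$ function solving \reff{discretized GMFC: Q_function} is well defined and continuous, and (ii) the supremum over the action set is attained. First I would record that, since $\Sc$, $\Ac$ and $\mathcal{I}_M=\{m/M:0\le m\le M\}$ are all finite, both $\widetilde{\pmb\Mc}_M=\Pc(\Sc)^{\mathcal{I}_M}$ and $\widetilde{\pmb\Pi}_M=\Pc(\Ac)^{\Sc\times\mathcal{I}_M}$ are finite Cartesian products of probability simplices; each factor is a compact convex subset of a Euclidean space, so $\widetilde{\pmb\Mc}_M$ and $\widetilde{\pmb\Pi}_M$ are compact. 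By Assumption \ref{assm:r} the reward is bounded, $|\widetilde R|\le M_r$.

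Next I would construct $\widetilde Q$ as a fixed point. Define the Bellman operator on the space $C_b(\widetilde{\pmb\Mc}_M\times\widetilde{\pmb\Pi}_M)$ of bounded continuous functions by
\[
(\mathcal{T}q)(\tilde{\pmb\mu},\tilde{\pmb\pi})=\widetilde R(\tilde{\pmb\mu},\tilde{\pmb\pi})+\gamma\sup_{\tilde{\pmb\pi}'\in\widetilde{\pmb\Pi}_M}q\big(\widetilde{\pmb\Phi}(\tilde{\pmb\mu},\tilde{\pmb\pi}),\tilde{\pmb\pi}'\big).
\]
Using $|\sup_x f-\sup_x g|\le\sup_x|f-g|$ one gets $\|\mathcal{T}q_1-\mathcal{T}q_2\|_\infty\le\gamma\|q_1-q_2\|_\infty$, so $\mathcal{T}$ is a $\gamma$-contraction; since $\gamma\in(0,1)$ it admits a unique bounded fixed point, which is exactly $\widetilde Q$ from \reff{discretized GMFC: Q_function}, with $\|\widetilde Q\|_\infty\le M_r/(1-\gamma)$.

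I would then show that $\mathcal{T}$ preserves continuity, so that its fixed point is continuous. The maps $\tilde{\pmb\mu}\mapsto\tilde\mu^{\alpha_m,W}$ in \reff{notation:discretized_GMFC} are linear (the coefficients $W(\alpha_m,\alpha_{m'})$ are fixed numbers on the finite grid), and together with Assumptions \ref{assm:P}, \ref{assm:r} and the multilinear dependence of \reff{notation:discretized_GMFC_Phi} on the coordinates $\tilde\mu^{\alpha_m}(s)$ and $\tilde\pi^{\alpha_m}(a|s)$, both $\widetilde R$ and $\widetilde{\pmb\Phi}$ are (Lipschitz) continuous on $\widetilde{\pmb\Mc}_M\times\widetilde{\pmb\Pi}_M$. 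For $q\in C_b$, the map $(\tilde{\pmb\mu},\tilde{\pmb\pi},\tilde{\pmb\pi}')\mapsto q(\widetilde{\pmb\Phi}(\tilde{\pmb\mu},\tilde{\pmb\pi}),\tilde{\pmb\pi}')$ is continuous, and since the supremum is over the compact set $\widetilde{\pmb\Pi}_M$, Berge's maximum theorem gives that $(\tilde{\pmb\mu},\tilde{\pmb\pi})\mapsto\sup_{\tilde{\pmb\pi}'}q(\cdots)$ is continuous; hence $\mathcal{T}q\in C_b$. As $C_b$ is closed under $\|\cdot\|_\infty$, the contraction restricted to $C_b$ forces $\widetilde Q\in C_b$ (equivalently, $\widetilde Q$ is the uniform limit of the continuous iterates $\mathcal{T}^n q_0$). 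Here the hypothesis on $\gamma(L_P+1)$ enters only to upgrade this to a quantitative Lipschitz bound: tracking constants through $\mathcal{T}$ gives $\mathrm{Lip}(\mathcal{T}q)\le L_{\widetilde R}+\gamma\,C(L_P+1)\,\mathrm{Lip}(q)$, so that $\gamma\,C(L_P+1)<1$ yields the finite estimate $\mathrm{Lip}(\widetilde Q)\le L_{\widetilde R}/(1-\gamma C(L_P+1))$.

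Finally, fix any $\tilde{\pmb\mu}\in\widetilde{\pmb\Mc}_M$. The map $\tilde{\pmb\pi}\mapsto\widetilde Q(\tilde{\pmb\mu},\tilde{\pmb\pi})$ is continuous on the compact set $\widetilde{\pmb\Pi}_M$, so by the Weierstrass extreme value theorem it attains its maximum at some $\tilde{\pmb\pi}^*\in\widetilde{\pmb\Pi}_M$, the claimed optimal policy ensemble. The main obstacle is the continuity step: one must verify that the supremum defining $\mathcal{T}$ depends continuously on $(\tilde{\pmb\mu},\tilde{\pmb\pi})$, which hinges on compactness of $\widetilde{\pmb\Pi}_M$ and on the continuity of $\widetilde R$ and $\widetilde{\pmb\Phi}$ supplied by Assumptions \ref{assm:P}--\ref{assm:r}; the condition on $\gamma(L_P+1)$ is needed only for the stronger Lipschitz regularity of $\widetilde Q$, not for mere existence of the maximizer.
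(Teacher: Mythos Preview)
Your argument is correct and reaches the same conclusion as the paper, but the route to continuity of $\widetilde Q$ is genuinely different. The paper proves continuity (Lemma~\ref{lemma:continuityQ}) by unrolling the definition \reff{deftildeQ}: it bounds $|\tilde Q(\tilde{\pmb\mu}_n,\tilde{\pmb\pi}_n)-\tilde Q(\tilde{\pmb\mu},\tilde{\pmb\pi})|$ term by term, uses induction on the dynamics to get $\int_{\mathcal I}\|\tilde\mu_t^\alpha-\tilde\mu_{n,t}^\alpha\|_1\,d\alpha\le(L_P+1)^{t-1}\int_{\mathcal I}\|\tilde\mu_1^\alpha-\tilde\mu_{n,1}^\alpha\|_1\,d\alpha$, and then sums $\sum_t\gamma^t(L_P+1)^t$, which is where the hypothesis $\gamma(L_P+1)<1$ is actually used (the ``$<\infty$'' in the statement is a typo). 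You instead argue operator-theoretically: $\widetilde R$ and $\widetilde{\pmb\Phi}$ are continuous, Berge's theorem makes $\mathcal T$ preserve continuity, and the fixed point is a uniform limit of continuous iterates. This buys you mere continuity, and hence existence of the maximizer, \emph{without} the smallness condition on $\gamma(L_P+1)$; you correctly note that this condition is only needed for the quantitative Lipschitz bound. The paper's direct estimate, on the other hand, gives an explicit Lipschitz constant for $\widetilde Q$ in one shot, which is convenient for the later approximation results. Both proofs then finish the same way: compactness of $\widetilde{\pmb\Pi}_M$ plus continuity of $\tilde{\pmb\pi}\mapsto\widetilde Q(\tilde{\pmb\mu},\tilde{\pmb\pi})$ yields a maximizer (the paper additionally packages the verification that this maximizer is optimal for the control problem into Lemma~\ref{lemma:verification}).
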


Furthermore, we show that with sufficiently fine partitions of the graphon index $\mathcal{I}$, i.e., $M$ is sufficiently large, block GMFC \reff{discretized GMFC: Q_function}-\reff{notation:discretized_GMFC_Phi} well approximates the multi-agent system in Section \ref{subsec:local_setting}.
\begin{Theorem} \label{thm:discretized_GMFC_approximate_pareto_property}
Assume $\gamma \cdot (L_P + 1) < 1$ and Assumptions \ref{assm:WN}, \ref{assm:W}, \ref{assm:P}, \ref{assm:r} and \ref{assm:pi}. Under either \reff{C1} or \reff{C2}, for any $\varepsilon>0$, there exists $N_\varepsilon$, $M_\varepsilon$ such that for $N \geq N_\varepsilon$, the optimal policy ensemble $\tilde{\pmb\pi}^*$ of block GMFC \reff{discretized GMFC: Q_function} with $M_\varepsilon$ blocks provides an $\varepsilon$-Pareto optimality  $(\tilde\pi^{1, *}, \ldots, \tilde\pi^{N, *}): = \Gamma^N(\tilde{\pmb \pi}^*)$ for the multi-agent system \reff{eq:N_agent_reward} with $N$ agents.
\end{Theorem}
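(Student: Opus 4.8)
The plan is to interpose the optimal value $V(\mu)$ of the full GMFC and the optimal value of block GMFC between $V_N(\mu)$ and the value $J_N(\mu,\Gamma^N(\tilde{\pmb\pi}^*))$ delivered by the deployed block policy, and to bound the three resulting gaps separately. Writing $\widetilde V_M(\tilde{\pmb\mu}):=\sup_{\tilde{\pmb\pi}\in\widetilde{\pmb\Pi}_M}\widetilde Q(\tilde{\pmb\mu},\tilde{\pmb\pi})$ for the optimal block-GMFC value, which is attained at $\tilde{\pmb\pi}^*$ by Theorem \ref{thm:GMFC_existence_pareto_optimality}, and letting $\tilde{\pmb\mu}$ be the grid restriction of the constant ensemble $\mu$, I decompose
\beqs
V_N(\mu)-J_N(\mu,\Gamma^N(\tilde{\pmb\pi}^*))=\big[V_N(\mu)-V(\mu)\big]+\big[V(\mu)-\widetilde V_M(\tilde{\pmb\mu})\big]+\big[\widetilde V_M(\tilde{\pmb\mu})-J_N(\mu,\Gamma^N(\tilde{\pmb\pi}^*))\big].
\enqs
I would make each bracket at most $\varepsilon/3$, first choosing $M_\varepsilon$ large (for the middle term, which depends only on $M$) and then, with $M_\varepsilon$ fixed, choosing $N_\varepsilon$ large (for the outer two terms); getting this order of limits right is essential.

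The first bracket is $\mathcal{O}(1/\sqrt N)$ directly by Theorem \ref{thm:GMFC_approximate_pareto_property}. For the middle bracket, the value-function discretization error, note that every block policy is a piecewise-constant element of ${\pmb\Pi}$, so block GMFC is a restriction of GMFC and $\widetilde V_M\le V$. For the reverse direction I would take a near-optimal GMFC policy ${\pmb\pi}$, which by Assumption \ref{assm:pi} is $L_{\pmb\Pi}$-Lipschitz in $\alpha$, and replace it by its grid discretization. Three estimates combine: (i) by Assumption \ref{assm:W} the map $\alpha\mapsto\mu_t^{\alpha,W}$ is Lipschitz, so replacing the integral \reff{notation:weighted_neighborhood} by the $M$-point sum \reff{notation:discretized_GMFC} costs $\mathcal{O}(L_W/M)$ in $\|\cdot\|_1$; (ii) by Assumptions \ref{assm:P}--\ref{assm:r}, $P$ and $r$ are Lipschitz in the neighborhood measure, turning these grid errors into $\mathcal{O}(1/M)$ per-step errors in ${\pmb\Phi}$ and $R$; and (iii) under $\gamma(L_P+1)<1$ these one-step errors accumulate geometrically over the infinite horizon to a total value gap of $\mathcal{O}(1/M)$, via a discrete Gr\"onwall / contraction estimate on the distance between ${\pmb\mu}_t$ and $\tilde{\pmb\mu}_t$. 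Hence $|V(\mu)-\widetilde V_M(\tilde{\pmb\mu})|=\mathcal{O}(1/M)\to0$, and I fix $M_\varepsilon$.

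The third bracket is a finite-agent propagation-of-chaos estimate for the now-fixed piecewise-constant policy $\tilde{\pmb\pi}^*$. The key observation is that $\tilde{\pmb\pi}^*$ is constant on each of the $M_\varepsilon$ blocks, so the $\approx N/M_\varepsilon$ agents in a block are exchangeable and their empirical state distribution concentrates on the block distribution by a law of large numbers; consequently the neighborhood empirical measures $\mu_t^{i,W_N}$ of \reff{equ:densegraph} concentrate on the block mean-field measures $\tilde\mu_t^{\alpha_m,W}$, using Assumption \ref{assm:WN} to pass from $W_N$ to $W$ and, under \reff{C2}, an extra concentration step for the Bernoulli weights $\xi_{ij}^N$. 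Coupling the $N$-agent trajectories to the block-GMFC flow and propagating the per-step discrepancy with the contraction $\gamma(L_P+1)<1$, the same mechanism already used in Theorem \ref{thm:GMFC_approximate_pareto_property}, yields $|\widetilde V_M(\tilde{\pmb\mu})-J_N(\mu,\Gamma^N(\tilde{\pmb\pi}^*))|\to0$ as $N\to\infty$ for fixed $M_\varepsilon$. Summing the three bounds gives $V_N(\mu)-J_N(\mu,\Gamma^N(\tilde{\pmb\pi}^*))\le\varepsilon$ for $N\ge N_\varepsilon$, which is the claimed $\varepsilon$-Pareto optimality.

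I expect the main obstacle to be the third bracket, precisely because the block policy $\tilde{\pmb\pi}^*$ violates the Lipschitz-in-$\alpha$ hypothesis of Assumption \ref{assm:pi} at the $M_\varepsilon$ block boundaries, so the deployment estimate of Theorem \ref{thm:GMFC_approximate_pareto_property} cannot be quoted verbatim. I would resolve this by arguing block by block: within a block the policy is identical across agents, the most favorable case for the coupling, while the finitely many boundary agents contribute a vanishing fraction as $N\to\infty$, so the homogeneous-within-block mean-field argument goes through without global Lipschitz continuity. Care is also needed to ensure the graphon-approximation error $\|W_N-W\|_{\square}$ and the within-block sampling fluctuations are controlled simultaneously at a compatible rate.
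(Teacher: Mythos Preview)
Your three–bracket decomposition and the ingredients you invoke are essentially the same as the paper's, but there is one slip and one organizational difference worth noting. The slip: your claim that $\widetilde V_M\le V$ because ``block GMFC is a restriction of GMFC'' is not correct as stated. Block GMFC does not merely restrict the policy class; it also replaces the integral in $\mu^{\alpha,W}$ by the Riemann sum $\tilde\mu^{\alpha_m,W}$ in \reff{notation:discretized_GMFC}, so $\widetilde J^M(\mu,\tilde{\pmb\pi})\neq J(\mu,\tilde{\pmb\pi})$ for the same piecewise-constant $\tilde{\pmb\pi}$. Fortunately you do not need this inequality: the two-sided estimate $\sup_{{\pmb\pi}\in{\pmb\Pi}}|\widetilde J^M(\mu,{\pmb\pi})-J(\mu,{\pmb\pi})|=\mathcal O(1/M)$, which is exactly your points (i)--(iii) and is the content of the paper's Proposition~\ref{prop:discretizedGMFC_M_approximation} (built on Lemma~\ref{lemma:appendix_M_approximation}), already controls the middle bracket in both directions.

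Organizationally, the paper does not attack your third bracket by a direct block-by-block law of large numbers. Instead it inserts the full GMFC reward at the block policy, writing
\[
J_N(\mu,\Gamma^N(\tilde{\pmb\pi}^*))-V_N(\mu)
=\underbrace{\big[J_N(\cdot,\tilde{\pmb\pi}^*)-J(\cdot,\tilde{\pmb\pi}^*)\big]}_{\text{Thm~\ref{thm:GMFC_approximate_pareto_property}}}
+\underbrace{\big[J(\cdot,\tilde{\pmb\pi}^*)-\widetilde J^{M}(\cdot,\tilde{\pmb\pi}^*)\big]}_{\text{Prop.~\ref{prop:discretizedGMFC_M_approximation}}}
+\underbrace{\big[\widetilde J^{M}(\cdot,\tilde{\pmb\pi}^*)-\widetilde J^{M}(\cdot,{\pmb\pi}^{N,*})\big]}_{\ge 0}
+\underbrace{\big[\widetilde J^{M}(\cdot,{\pmb\pi}^{N,*})-J_N(\cdot,\pi^{N,*})\big]}_{\text{Prop.~\ref{prop:discretizedGMFC_M_approximation}+Thm~\ref{thm:GMFC_approximate_pareto_property}}},
\]
so every stochastic/finite-$N$ comparison is handled by the \emph{already proved} Theorem~\ref{thm:GMFC_approximate_pareto_property}, and the block discretization is handled by the deterministic uniform bound of Proposition~\ref{prop:discretizedGMFC_M_approximation}. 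This avoids redoing a propagation-of-chaos argument and makes the order-of-limits transparent. Your concern that $\tilde{\pmb\pi}^*$ is only piecewise constant and hence violates Assumption~\ref{assm:pi} is well taken; the paper's invocation of Theorem~\ref{thm:GMFC_approximate_pareto_property} at $\tilde{\pmb\pi}^*$ implicitly relies on the piecewise-Lipschitz relaxation remarked after Assumption~\ref{assm:W} (concretely, term $III$ in \reff{proof:estimate_error} is $\mathcal O(M_\varepsilon/N)$ for a step policy with $M_\varepsilon$ jumps), which is exactly the fix you propose.
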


 Theorem \ref{thm:discretized_GMFC_approximate_pareto_property} shows that the optimal policy ensemble of block GMFC is near-optimal for {\it all} sufficiently large multi-agent systems, meaning that block GMFC provides a good approximation for the multi-agent system.


Recall that block GMFC can be viewed as a MDP with deterministic dynamics and continuous state-action space. To learn block GMFC,  one can adopt a similar kernel-based Q learning method in \cite{GGWX2020} for MFC, a uniform discretization method or deep reinforcement algorithms like DDPG \cite{LHPal2016} for MFC in \cite{CLT2019} with theoretical guarantees. Since block GMFC has a higher dimension than classical MFC, we choose to adapt DRL algorithm Proximal Policy Optimization (PPO) \cite{schulman2017proximal} to block GMFC and then apply the learned policy ensemble of block GMFC to the multi-agent system to validate our theoretical findings. We describe the deployment of block GMFC in the multi-agent system in Algorithm \ref{alg:1}, which we call it \textbf{N-agent GMFC}. 



\begin{algorithm}[H]
\caption{N-agent GMFC}\label{alg:1}
\begin{algorithmic}
\State \textbf{Input} Initial state distribution $\mu_0$, number of agents $N$, episode length $T$, the learned policy $\tilde{\pmb \pi} \in \widetilde{\pmb \Pi}_M$ learned by PPO
\State \textbf{Initialize} $s^i_0 \sim \mu_0$, $i \in [N]$
\For {$t$ $=$ $1$ to $T$}
  \For {$i$ $=$ $1$ to $N$}
\State Choose $m(i) = \argmin_{m \in [M]}|\frac{i}{N} - \frac{m}{M}|$
\State Sample action $a^i_t \sim {\tilde\pi}^{\alpha_{m(i)}}(\cdot|s_t^i)$, observe reward $r_t^i$ and new state $s_{t+1}^i$
  \EndFor
\EndFor
\end{algorithmic}
\end{algorithm}

\section{Proofs of Main Results} \label{sec:proofs}
In this section, we will provide proofs of Theorems \ref{thm:GMFC_approximate_pareto_property}-\ref{thm:discretized_GMFC_approximate_pareto_property}.
\subsection{Proof of Theorem \ref{thm:GMFC_approximate_pareto_property}}
To prove Theorem \ref{thm:GMFC_approximate_pareto_property}, we need the following two Lemmas. We start by defining the step state distribution ${\pmb \mu}_t^N: = ({\mu}_t^{N, \alpha})_{\alpha \in \mathcal{I}}$ for notational simplicity
\beq\label{equ:step_state_distribution}
 {\mu}_t^{N, \alpha}(\cdot) = \sum_{i \in \Vc_N} \delta_{s_t^i}(\cdot) {\bf 1}_{\alpha \in (\frac{i-1}{N}, \frac{i}{N}]}.
\enq

Lemma \ref{lemma:appendix_mu_error} shows the convergence of the neighborhood empirical measure to the neighborhood mean-field measure.

\begin{Lemma} \label{lemma:appendix_mu_error} Assume Assumptions \ref{assm:WN}, \ref{assm:W}, \ref{assm:P} and \ref{assm:pi}. Under either condition \reff{C1} or \reff{C2}, for any policy ensemble ${\pmb \pi} \in {\pmb \Pi}$, we have
\beq \label{equ:appendix_mu_error}
\sum_{i=1}^N \int_{(\frac{i-1}{N}, \frac{i}{N}]}\E\big[\|\mu_t^{i, W_N} - \mu_t^{\alpha, W}\|_1\big] d\alpha \to 0, \;\;\; \mbox{as}\; N \to \infty,
\enq
where $\mu_t^i = \mu_t^\alpha \equiv \mu \in \Pc(\Sc)$.
\end{Lemma}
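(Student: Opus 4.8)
The plan is to argue by induction on the (fixed, finite) time index $t$, using the Lipschitz transition $P$ to propagate the closeness of $\mu_t^{i,W_N}$ to $\mu_t^{\alpha,W}$ from one step to the next. As a preliminary ingredient I would first establish that, for every fixed $t$, the label map $\beta\mapsto\mu_t^\beta$ is Lipschitz continuous on $\mathcal{I}$. This is itself proved by induction: $\mu_0^\beta\equiv\mu$ is constant, and if $\beta\mapsto\mu_t^\beta$ is Lipschitz then so is $\beta\mapsto\mu_t^{\beta,W}=\int_{\mathcal{I}}W(\beta,\gamma)\mu_t^\gamma\,d\gamma$ by Assumption \ref{assm:W}, whence the reformulated dynamics \reff{GFC:aggregated_dynamics} together with Assumptions \ref{assm:P} and \ref{assm:pi} give Lipschitz continuity of $\beta\mapsto\mu_{t+1}^\beta$. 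This regularity is exactly what lets continuum integrals over $\mathcal{I}$ be replaced by Riemann sums over the grid $\{j/N\}_{j}$ at $\mathcal{O}(1/N)$ cost.

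For the main estimate, fix $\alpha\in I_i:=(\frac{i-1}{N},\frac{i}{N}]$ and decompose, with $\hat\mu_t^j:=\E[\delta_{s_t^j}\mid\mathcal{F}_{t-1}]$ the one-step conditional law,
\begin{align*}
\mu_t^{i,W_N} - \mu_t^{\alpha,W}
&= \frac1N\sum_{j}\xi_{ij}^N\big(\delta_{s_t^j} - \hat\mu_t^j\big)
+ \frac1N\sum_{j}\xi_{ij}^N\big(\hat\mu_t^j - \mu_t^{j/N}\big) \\
&\quad + \frac1N\sum_{j}\big(\xi_{ij}^N - W(\tfrac iN,\tfrac jN)\big)\mu_t^{j/N}
+ \Big(\frac1N\sum_{j} W(\tfrac iN,\tfrac jN)\mu_t^{j/N} - \mu_t^{\alpha,W}\Big).
\end{align*}
The last term is a Riemann-sum-plus-label error: using the Lipschitz continuity of $\beta\mapsto W(\tfrac iN,\beta)\mu_t^\beta$ and $|\alpha-\tfrac iN|\le\tfrac1N$, its $L^1$ norm is $\mathcal{O}(1/N)$. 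The third term carries the graphon discrepancy: under \reff{C1} it equals $\frac1N\sum_j\big(W_N-W\big)(\tfrac iN,\tfrac jN)\mu_t^{j/N}$, and after applying $\int_{I_i}\cdot\,d\alpha$ and summing over $i$ the resulting bilinear form in the sign pattern and in $\mu_t^{\cdot}(s)$ is bounded by a constant multiple of $\|W_N-W\|_\square$, which vanishes by Assumption \ref{assm:WN}; under \reff{C2} one first splits off the centred Bernoulli part $\xi_{ij}^N-W_N(\tfrac iN,\tfrac jN)$, an independent mean-zero array contributing $\mathcal{O}(1/\sqrt N)$, leaving again the cut-norm term. The first term is the genuine fluctuation: conditionally on $\mathcal{F}_{t-1}$ the states $\{s_t^j\}_j$ are independent, so this is a sum of conditionally independent mean-zero measures with weights $\xi_{ij}^N\in[0,1]$, and a conditional variance bound gives $\E\|\cdot\|_1=\mathcal{O}(1/\sqrt N)$.

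It remains to control the second term, which is where the induction closes. Since $\pi^j={\pmb\pi}^{j/N}$ by \reff{equ_relation_pmbpi_piN}, writing out $\hat\mu_t^j$ and $\mu_t^{j/N}={\pmb\Phi}^{j/N}({\pmb\mu}_{t-1},{\pmb\pi})$ via \reff{GFC:aggregated_dynamics} and invoking Assumption \ref{assm:P} bounds $\|\hat\mu_t^j-\mu_t^{j/N}\|_1$ by $L_P\|\mu_{t-1}^{j,W_N}-\mu_{t-1}^{j/N,W}\|_1$ plus a term comparing $\delta_{s_{t-1}^j}$ with $\mu_{t-1}^{j/N}$. For this reason I would run a coupled induction, carrying alongside $e_{t}:=\sum_i\int_{I_i}\E\|\mu_t^{i,W_N}-\mu_t^{\alpha,W}\|_1\,d\alpha$ the auxiliary per-agent law discrepancy $d_t:=\frac1N\sum_j\E\|\Lc(s_t^j)-\mu_t^{j/N}\|_1$; Assumption \ref{assm:P} then yields a recursion of the form $e_t,\,d_t\le C\big(d_{t-1}+e_{t-1}\big)+\mathcal{O}(1/\sqrt N)+\mathcal{O}(\|W_N-W\|_\square)+\mathcal{O}(1/N)$. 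The base case $t=0$ is immediate: the $s_0^i\sim\mu$ are i.i.d.\ and $\mu_0^{j/N}=\mu$, so $d_0=0$ and $e_0$ reduces to the fluctuation and graphon terms, both vanishing. Iterating the recursion over the finitely many steps up to $t$ gives $e_t\to0$, which is \reff{equ:appendix_mu_error}.

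The main obstacle is the fluctuation term. Although the conclusion is a law-of-large-numbers statement, the agents' states are not independent: they are coupled through the shared neighbourhood measures at every earlier time. The device that makes the argument work is conditioning on $\mathcal{F}_{t-1}$, under which the $s_t^j$ become independent with explicit conditional laws $\hat\mu_t^j$, so that the empirical-versus-mean error is a conditionally independent sum amenable to a variance estimate; the price is the law-discrepancy remainder $\hat\mu_t^j-\mu_t^{j/N}$, which must be fed back into the induction through $d_t$. A secondary technical point is the random-graph case \reff{C2}, where the Bernoulli weights inject an extra independent fluctuation that has to be separated from the deterministic cut-norm comparison before concentration is applied.
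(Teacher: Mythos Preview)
Your approach differs substantially from the paper's. The paper does not attempt a self-contained induction at all: it writes $\mu_t^{i,W_N}-\mu_t^{\alpha,W}$ as the sum of just two pieces,
\[
I_1=\int_{\mathcal I}W_N(\tfrac iN,\beta)\big(\mu_t^{N,\beta}-\mu_t^\beta\big)\,d\beta,
\qquad
I_2=\int_{\mathcal I}\big(W_N(\tfrac iN,\beta)-W(\alpha,\beta)\big)\mu_t^\beta\,d\beta,
\]
handles $I_2$ via the equivalence of cut-norm convergence and $\|W_N-W\|_{L_\infty\to L_1}\to0$, and disposes of $I_1$ by invoking Theorem~2 of \cite{CK2021} as a black box. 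Your four-term decomposition with an explicit coupled induction is essentially an attempt to reprove that cited propagation-of-chaos result from scratch, which is a legitimate and more transparent programme; the Lipschitz-in-$\beta$ preliminary and the handling of the third and fourth terms (cut norm and Riemann error) are fine, and your treatment of the Bernoulli case \reff{C2} matches the paper's.

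There is, however, a genuine gap in the second term. You bound $\|\hat\mu_t^j-\mu_t^{j/N}\|_1$ by $L_P\|\mu_{t-1}^{j,W_N}-\mu_{t-1}^{j/N,W}\|_1$ plus ``a term comparing $\delta_{s_{t-1}^j}$ with $\mu_{t-1}^{j/N}$'', and then propose to feed this back through $d_{t-1}:=\frac1N\sum_j\E\|\Lc(s_{t-1}^j)-\mu_{t-1}^{j/N}\|_1$. But these quantities are not comparable: writing $g_j(s):=\Phi(s,\mu_{t-1}^{j/N,W},\pi^j)$, the residual is $g_j(s_{t-1}^j)-\int g_j\,d\mu_{t-1}^{j/N}$, and $\E\|g_j(s_{t-1}^j)-\int g_j\,d\mu_{t-1}^{j/N}\|_1$ is $O(1)$ per agent \emph{even when} $\Lc(s_{t-1}^j)=\mu_{t-1}^{j/N}$ exactly (take $P$ to be the identity kernel to see this). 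So applying the triangle inequality over $j$ in the second term is fatal, and the recursion $e_t\le C(e_{t-1}+d_{t-1})+\cdots$ does not close as stated.

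The fix is to refrain from the triangle inequality there and instead observe that $\frac1N\sum_j\xi_{ij}^N\big[g_j(s_{t-1}^j)-\int g_j\,d\mu_{t-1}^{j/N}\big]$ has \emph{the same structure} as the original problem at time $t-1$, now with test functions $g_j$: condition on $\mathcal F_{t-2}$, peel off another $O(1/\sqrt N)$ conditionally-independent fluctuation, and recurse. Equivalently, carry as the inductive hypothesis a quantity uniform over bounded test-function families (this is exactly the content of the paper's Lemma~\ref{lemma:appendix_state_error}, which is proved \emph{after} Lemma~\ref{lemma:appendix_mu_error} using the latter as input). Either way, the simple two-quantity recursion you propose must be replaced by a telescoped unrolling or a strengthened hypothesis.
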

Moreover, if the graphon convergence in Assumption \ref{assm:WN} is at rate $\mathcal{O}(\frac{1}{\sqrt{N}})$, then $$\sum_{i=1}^N \int_{(\frac{i-1}{N}, \frac{i}{N}]}\E\big[\|\mu_t^{i, W_N} - \mu_t^{\alpha, W}\|_1\big]d\alpha = \mathcal{O}(\frac{1}{\sqrt{N}}).$$
\begin{proof}[Proof of Lemma \ref{lemma:appendix_mu_error}] 
We first prove  \reff{equ:appendix_mu_error} under the condition \reff{C1} and then show  \reff{equ:appendix_mu_error} also holds under the condition \reff{C2}.\\
 {\bf Case 1: \; $\xi_{i,j}^N= W_N(\frac{i}{N}, \frac{j}{N})$}.\; Note that under the condition \reff{C1}, $\mu_t^{i, W_N} = \int_{\mathcal{I}} W_N(\frac{i}{N}, \beta) {\mu}_t^{N, \beta} d\beta$ by the definition of $\mu_t^{N, \alpha}$ in \reff{equ:step_state_distribution}.  Then 
\beqs
& & \sum_{i=1}^N \int_{(\frac{i-1}{N}, \frac{i}{N}]}\E\big[\|\mu_t^{i, W_N} - \mu_t^{\alpha, W}\|_1\big] d\alpha\\
&= &  \sum_{i=1}^N \int_{(\frac{i-1}{N}, \frac{i}{N}]}\E\Big[\Big\|\int_{\mathcal{I}} W_N(\frac{i}{N}, \beta) {\mu}_t^{N, \beta} d\beta -\int_{\mathcal{I}} W(\alpha, \beta){\mu}_t^{\beta}d\beta \Big\|_1\Big]d\alpha\\
&\leq  &  \sum_{i=1}^N \int_{(\frac{i-1}{N}, \frac{i}{N}]}\E\Big[\Big\|\int_{\mathcal{I}} W_N(\frac{i}{N}, \beta) {\mu}_t^{N, \beta} d\beta -\int_{\mathcal{I}} W_N(\frac{i}{N}, \beta){\mu}_t^{\beta}d\beta \Big\|_1\Big]d\alpha\\
& & \;\;\; + \; \sum_{i=1}^N \int_{(\frac{i-1}{N}, \frac{i}{N}]}\E\Big[\Big\|\int_{\mathcal{I}} W_N(\frac{i}{N}, \beta) {\mu}_t^{\beta} d\beta -\int_{\mathcal{I}} W(\alpha, \beta){\mu}_t^{\beta}d\beta \Big\|_1\Big]d\alpha\\
&= & : I_1 + I_2.
\enqs
For the term $I_1$, we adopt Theorem 2 in \cite{CK2021} and have that under the policy ensemble ${\pmb \pi}$ and $N$-agent policy $(\pi^1, \ldots, \pi^N): =\Gamma_N({\pmb \pi})$, with $\Gamma_N$ defined in \reff{equ_relation_pmbpi_piN} 
\beqs
 I_1 = \E\Big[\Big\|\int_{\mathcal{I}} W_N(\frac{i}{N}, \beta) {\mu}_t^{N, \beta} d\beta -\int_{\mathcal{I}} W_N(\frac{i}{N}, \beta){\mu}_t^{\beta}d\beta \Big\|_1\Big] \to 0, \; \mbox{as}\; N \to \infty.
\enqs
Moreover, if the graphon convergence in Assumption \ref{assm:WN} is at rate $\mathcal{O}(\frac{1}{\sqrt{N}})$, then the term $I_1$ is also at rate $\mathcal{O}(\frac{1}{\sqrt{N}})$.\\
By noting that $W_N(\alpha, \beta) = W_N\big(\frac{\lceil N\alpha \rceil}{N}, \frac{\lceil N\beta \rceil}{N}\big)$,
\beqs
I_2 &=& \sum_{i=1}^N \int_{(\frac{i-1}{N}, \frac{i}{N}]}\Big\|\int_{\mathcal{I}} W_N\big(\frac{\lceil N\alpha \rceil}{N}, \beta\big) {\mu}_t^{\beta} d\beta -\int_{\mathcal{I}} W(\alpha, \beta){\mu}_t^{\beta}
d\beta \Big\|_1 d\alpha\\
&=& \sum_{i=1}^N \int_{(\frac{i-1}{N}, \frac{i}{N}]}\Big\|\int_{\mathcal{I}} W_N\big(\alpha, \beta\big) {\mu}_t^{\beta} d\beta -\int_{\mathcal{I}} W(\alpha, \beta){\mu}_t^{\beta}d\beta \Big\|_1 d\alpha\\
&=& \int_{\mathcal{I}} \Big\|\int_{\mathcal{I}} W_N\big(\alpha, \beta\big) {\mu}_t^{\beta} d\beta -\int_{\mathcal{I}} W(\alpha, \beta){\mu}_t^{\beta}d\beta \Big\|_1 d\alpha\\
&=& \sum_{s \in \Sc} \int_{\mathcal{I}}\Big|\int_{\mathcal{I}} W_N\big(\alpha, \beta\big) {\mu}_t^{\beta}(s) d\beta -\int_{\mathcal{I}} W(\alpha, \beta){\mu}_t^{\beta}(s)d\beta \Big|d\alpha\\
&\to & 0,
\enqs
where the last inequality is from the fact in \cite{L2012} that the convergence of $\|W_N - W\|_\square \to 0$ is equivalent to the convergence of
$$\|W_N- W\|_{L_\infty \to L_1} := \sup_{\|g\|_\infty \leq 1} \int_{\mathcal{I}}\biggl|\int_{\mathcal{I}}\big(W_N(\alpha, \beta) - W(\alpha, \beta) \big)g(\beta)d\beta\biggl|d\alpha \to 0.$$
Combining $I_1$ and $I_2$, we prove \reff{equ:appendix_mu_error} under the condition \reff{C1}.

\medskip

\noindent {\bf Case 2: $\xi_{i,j}^N$ are random variables with Bernoulli($W_N(\frac{i}{N}, \frac{j}{N})$)}. 
\beqs
 & & \sum_{i=1}^N \int_{(\frac{i-1}{N}, \frac{i}{N}]} \mathbb{E}\|\mu^{i,W_N}_t-\mu^{\alpha, W}_t\|_{1}d\alpha \\
&=&  \sum_{i=1}^N \int_{(\frac{i-1}{N}, \frac{i}{N}]} \mathbb{E} \big\|\frac{1}{N}\sum_{j=1}^N \xi_{ij}^N \delta_{s_t^j} - \int_{\mathcal{I}} W(\alpha, \beta) \mu_t^{\beta} d\beta\big\|_1 d\alpha\\
&\leq &  \sum_{i=1}^N \int_{(\frac{i-1}{N}, \frac{i}{N}]} \mathbb{E} \big\|\frac{1}{N}\sum_{j=1}^N \xi_{ij}^N \delta_{s_t^j} - \int_{\mathcal{I}} W_N(\frac{i}{N}, \beta) \mu_t^{N, \beta} d\beta\big\|_1 d\alpha\\
& & \;\;\; + \; \sum_{i=1}^N \int_{(\frac{i-1}{N}, \frac{i}{N}]} \mathbb{E} \big\|\int_{\mathcal{I}} W_N(\frac{i}{N}, \beta) \mu_t^{N, \beta} d\beta - \int_{\mathcal{I}} W(\alpha, \beta) \mu_t^{\beta} d\beta\big\|_1 d\alpha\\
& =:& I_1 + I_2.
\enqs
Note from {\bf Case 1} that $I_2 \to 0$ as $N \to \infty$ and $I_2 = \mathcal{O}(\frac{1}{\sqrt{N}})$ if the graphon convergence in Assumption \ref{assm:WN} is at rate $\mathcal{O}(\frac{1}{\sqrt{N}})$. Therefore, it is enough to estimate $I_1$.
\beqs
I_1 &=&\mathbb{E} \big\|\frac{1}{N}\sum_{j=1}^N \xi_{ij}^N \delta_{s_t^j} - \int_{\mathcal{I}} W_N(\frac{i}{N}, \beta) \mu_t^{N, \beta} d\beta\big\|_1 \\
& \leq &  \mathbb{E} \Big[ \mathbb{E}\Big[\sup_{f: \Sc \to \{-1, 1\}} \Big\{\frac{1}{N} \sum_{j=1}^N \xi_{ij}^N f(s_t^j) - \frac{1}{N}\sum_{j=1}^N W_N(\frac{i}{N}, \frac{j}{N}) f(s_t^j)\Big\}\Big|s_t^1, \ldots, s_t^N\Big]\Big].
\enqs
We proceed the same argument as in the proof of Theorem 6.3 in \cite{GGWX2020}. Precisely, conditioned on $s_t^1, \ldots, s_t^N$, $\Big\{\xi_{ij}^N f(s_t^j) - W_N(\frac{i}{N}, \frac{j}{N}) f(s_t^j)\Big\}_{j=1}^N$ is a sequence of independent mean-zero random variables bounded in $[-1, 1]$ due to $\E[\xi_{i,j}^N]= W_N(\frac{i}{N}, \frac{j}{N})$. This implies that each $\xi_{ij}^N f(s_t^j) - W_N(\frac{i}{N}, \frac{j}{N}) f(s_t^j)$ is a sub-Gaussian with variance bounded by 4. As a result, conditioned on $s_t^1, \ldots, s_t^N$, $\Big\{\frac{1}{N} \sum_{j=1}^N \xi_{ij}^N f(s_t^j) - \frac{1}{N}\sum_{j=1}^N W_N(\frac{i}{N}, \frac{j}{N}) f(s_t^j)\Big\}_{i=1}^N$ is a mean-zero sub-Gaussian random variable with variance $\frac{4}{N}$. By the equation (2.66) in \cite{W2019}, we have
\beqs
I_1 &\leq& \mathbb{E} \Big[ \mathbb{E}\Big[\sup_{f: \Sc \to \{-1, 1\}} \Big\{\frac{1}{N} \sum_{j=1}^N \xi_{ij}^N f(s_t^j) - \frac{1}{N}\sum_{j=1}^N W_N(\frac{i}{N}, \frac{j}{N}) f(s_t^j)\Big\}\Big|s_t^1, \ldots, s_t^N\Big]\Big]\\
&\leq & \frac{\sqrt{8\ln(2) |\Sc|}}{\sqrt{N}}.
\enqs
Therefore, combining $I_1$ and $I_2$ in {\bf Case 2}, we show that when $\xi_{i,j}^N$ are random variables with Bernoulli($W_N(\frac{i}{N}, \frac{j}{N})$), \reff{equ:appendix_mu_error} holds under the condition \reff{C2}.
\end{proof}

\medskip

Lemma \ref{lemma:appendix_state_error} shows the convergence of the state distribution of $N$-agent game to the state distribution of GMFC.

\begin{Lemma} \label{lemma:appendix_state_error}
Assume Assumptions \ref{assm:WN}, \ref{assm:W}, \ref{assm:P} and \ref{assm:pi}. For any uniformly bounded family $\Gc$ of functions $g: \Sc \to \R$, we have
\beq \label{equ:appendix_state_error}
\sup_{g \in \Gc}\sum_{i=1}^N \int_{(\frac{i-1}{N}, \frac{i}{N}]}\Big|\E[g(s_t^i) - g(s_t^\alpha)]\Big| \to 0,
\enq
where $s_0^i \sim \mu_0$, $s_0^\alpha \sim \mu_0$. Moreover, if the graphon convergence in Assumption \ref{assm:WN} is at rate $\mathcal{O}(\frac{1}{\sqrt{N}})$, then $$\sup_{g \in \Gc}\sum_{i=1}^N \int_{(\frac{i-1}{N}, \frac{i}{N}]}\Big|\E[g(s_t^i) - g(s_t^\alpha)]\Big|= \mathcal{O}(\frac{1}{\sqrt{N}}).$$
\end{Lemma}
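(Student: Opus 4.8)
The plan is to reduce the stated supremum bound to a total-variation ($L^1$) estimate between laws and prove the latter by induction on $t$. Since $\Sc$ is finite, for any $g$ with $\|g\|_\infty \le C_\Gc$ one has $|\E[g(s_t^i)] - \E[g(s_t^\alpha)]| \le C_\Gc \|\Lc(s_t^i) - \mu_t^\alpha\|_1$, so it suffices to show that
\[
\Delta_t := \sum_{i=1}^N \int_{(\frac{i-1}{N},\frac{i}{N}]} \big\|\Lc(s_t^i) - \mu_t^\alpha\big\|_1 \, d\alpha \longrightarrow 0
\]
(at rate $\mathcal{O}(1/\sqrt N)$ under the rate hypothesis), where $\Lc(s_t^i)$ is the law of $s_t^i$ in the $N$-agent system run with the policy tuple $\Gamma^N({\pmb\pi})$ and $\mu_t^\alpha=\Lc(s_t^\alpha)$ is the GMFC law. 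The base case $\Delta_0 = 0$ is immediate since $s_0^i, s_0^\alpha \sim \mu_0$.

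For the induction step I would expand both one-step laws through the transition kernel and insert two intermediate kernels, writing $\Lc(s_{t+1}^i)(s') - \mu_{t+1}^\alpha(s')$ as a sum of three differences: (A) a mean-field term in which only the neighborhood measure changes from $\mu_t^{i,W_N}$ to $\mu_t^{\alpha,W}$; (B) a policy term in which only the policy changes from $\pi^i = {\pmb\pi}^{\frac{i}{N}}$ to $\pi^\alpha$; and (C) a state term $\sum_{s'}\big|\E[\hat Q^\alpha(s_t^i,s') - \hat Q^\alpha(s_t^\alpha,s')]\big|$ with $\hat Q^\alpha(s,s') := \sum_{a}\pi^\alpha(a|s)P(s'|s,\mu_t^{\alpha,W},a)$. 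Summing $\sum_{s'}|\cdot|$ and then $\sum_i \int_{(\frac{i-1}{N},\frac{i}{N}]}\cdot\,d\alpha$: term (A) is controlled by $L_P\sum_i\int \E\|\mu_t^{i,W_N}-\mu_t^{\alpha,W}\|_1\, d\alpha$ via Assumption \ref{assm:P} together with $\sum_a\pi^i(a|s_t^i)=1$, hence tends to $0$ at rate $\mathcal{O}(1/\sqrt N)$ by Lemma \ref{lemma:appendix_mu_error}; term (B) is bounded by $\E\|\pi^i(\cdot|s_t^i)-\pi^\alpha(\cdot|s_t^i)\|_1 \le L_{\pmb \Pi}\,|\frac{i}{N} - \alpha| \le L_{\pmb \Pi}/N$, using $\sum_{s'}P(s'|\cdots,a)=1$ and Assumption \ref{assm:pi}, and therefore sums to $\mathcal{O}(1/N)$.

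The crux is term (C), and it is precisely where a naive induction on the supremum formulation would break down: if one kept the supremum outside the sum, the transformed test functions $g \mapsto \sum_{s'}\hat Q^\alpha(\cdot,s')g(s')$ would depend on the integration variable $\alpha$, and one cannot extract a single common test function from $\sum_i \int$. The $L^1$ reformulation sidesteps this, because $\hat Q^\alpha(s,\cdot)$ is a probability kernel, $\sum_{s'}\hat Q^\alpha(s,s')=1$, so for each fixed $\alpha$
\[
\sum_{s'}\Big|\sum_s \big(\Lc(s_t^i)(s)-\mu_t^\alpha(s)\big)\hat Q^\alpha(s,s')\Big| \le \sum_s\big|\Lc(s_t^i)(s)-\mu_t^\alpha(s)\big| = \big\|\Lc(s_t^i)-\mu_t^\alpha\big\|_1;
\]
that is, $\hat Q^\alpha$ acts non-expansively in $L^1$ and its $\alpha$-dependence is harmless since it is treated one $\alpha$ at a time. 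Hence term (C) sums to at most $\Delta_t$, yielding the recursion $\Delta_{t+1}\le \Delta_t + \varepsilon_N$ with $\varepsilon_N = \mathcal{O}(1/\sqrt N)$, and iterating over the fixed horizon gives $\Delta_t = \mathcal{O}(1/\sqrt N)$.

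I expect the only delicate points to be (i) this interchange of supremum and integration, resolved by the non-expansiveness of $\hat Q^\alpha$ above, and (ii) the invocation of Lemma \ref{lemma:appendix_mu_error} in term (A): strictly speaking Lemma \ref{lemma:appendix_mu_error} and the present estimate are established by a single joint induction on $t$, in which the closeness of the laws at time $t$ (i.e. $\Delta_t$ small) feeds the neighborhood-measure bound at time $t$, which in turn feeds the state-closeness $\Delta_{t+1}$ at time $t+1$. The uniform boundedness of $\Gc$ is used only to pass from the $L^1$ estimate $\Delta_t$ back to the stated supremum via $C_\Gc\,\Delta_t$, and the same computation gives the claimed $\mathcal{O}(1/\sqrt N)$ rate under the sharpened form of Assumption \ref{assm:WN}.
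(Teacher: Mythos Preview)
Your proof is correct and follows essentially the same three--term decomposition and induction as the paper: a mean--field term controlled by Lemma~\ref{lemma:appendix_mu_error} and Assumption~\ref{assm:P}, a policy term controlled by Assumption~\ref{assm:pi}, and a state term handled by the induction hypothesis. The one notable difference is that the paper carries the test function $g$ through the induction directly, introducing $l_g(s,\mu,\pi):=\sum_{a,s'}g(s')P(s'|s,\mu,a)\pi(a|s)$ and invoking the hypothesis at time $t$ on the uniformly bounded family $\{l_g(\cdot,\mu_t^{\alpha,W},\pi^i)\}$, whereas you first pass to the $L^1$ quantity $\Delta_t=\sum_i\int\|\Lc(s_t^i)-\mu_t^\alpha\|_1\,d\alpha$ and use the non--expansiveness of the stochastic kernel $\hat Q^\alpha$. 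Your reformulation is a bit cleaner precisely on the point you flagged: in the paper's write--up the functions $l_g(\cdot,\mu_t^{\alpha,W},\pi^i)$ depend on the integration variable $\alpha$, so applying the hypothesis as stated (with the supremum outside the sum) requires, strictly speaking, the finite--$\Sc$ reduction to indicators that you have effectively built in via the $L^1$ bound. Your remark (ii) about a joint induction with Lemma~\ref{lemma:appendix_mu_error} is more cautious than necessary here, since the paper establishes Lemma~\ref{lemma:appendix_mu_error} independently (invoking an external propagation--of--chaos result) and then uses it as a black box in the present induction.
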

\begin{proof}[Proof of Lemma \ref{lemma:appendix_state_error}] The proof is by induction as follows. To do this, first introduce
\beqs
l_g(s, \mu, \pi):= \sum_{a \in \A} \sum_{s' \in \Sc} g(s') P(s'|s, \mu, a) \pi(a|s).
\enqs
\reff{equ:appendix_state_error} holds obviously at $t =0$. Suppose that \reff{equ:appendix_state_error} holds at $t$. Then for any uniformly bounded function $g$ with $|g| \leq M_g$ at $t+1$
\beq
& & \sum_{i=1}^N \int_{(\frac{i-1}{N}, \frac{i}{N}]}\Big|\E[g(s_{t+1}^i) - g(s_{t+1}^\alpha)]\Big|d\alpha \nonumber\\
&= & \sum_{i=1}^N \int_{(\frac{i-1}{N}, \frac{i}{N}]}\Big|\E\big[l_g(s_t^i, \mu_t^{i, W_N}, \pi^i)\big] - \E\big[l_g(s_t^\alpha, \mu_t^{\alpha, W}, \pi^\alpha)\big]\Big|d\alpha \nonumber \\
&\leq & \sum_{i=1}^N \int_{(\frac{i-1}{N}, \frac{i}{N}]}\Big|\E\big[l_g(s_t^i, \mu_t^{i, W_N}, \pi^i)\big] - \E\big[l_g(s_t^i, \mu_t^{\alpha, W}, \pi^i)\big]\Big|d\alpha \nonumber\\
& & \;\;\;\; +\;   \sum_{i=1}^N \int_{(\frac{i-1}{N}, \frac{i}{N}]}\Big|\E\big[l_g(s_t^i, \mu_t^{\alpha, W}, \pi^i)\big] - \E\big[l_g(s_t^\alpha, \mu_t^{\alpha, W}, \pi^i)\big]\Big|d\alpha \nonumber \\
& & \;\;\;\;+\;   \sum_{i=1}^N \int_{(\frac{i-1}{N}, \frac{i}{N}]}\Big|\E\big[l_g(s_t^\alpha, \mu_t^{\alpha, W}, \pi^i)\big] - \E\big[l_g(s_t^\alpha, \mu_t^{\alpha, W}, \pi^\alpha)\big]\Big|d\alpha \nonumber\\
&=&: I + II + III, \label{estimate _appendix_A2}
\enq
where the first equality is by the law of total expectation.
\paragraph{First term of \reff{estimate _appendix_A2}}
\beqs
 I &=&  \sum_{i=1}^N \int_{(\frac{i-1}{N}, \frac{i}{N}]}\Big|\E\big[l_g(s_t^i, \mu_t^{i, W_N}, \pi^i)\big] - \E\big[l_g(s_t^i, \mu_t^{\alpha, W}, \pi^i)\big]\Big|d\alpha\\
 &\leq& M_g L_P  \sum_{i=1}^N \int_{(\frac{i-1}{N}, \frac{i}{N}]}\E\big[\|\mu_t^{i, W_N} - \mu_t^{\alpha, W}\|_1\big]d\alpha\\
 &\to & 0, \;\;\; \mbox{as}\; N \to \infty
\enqs
where the second inequality is from the continuity of $P$, and the last inequality is from Lemma \ref{lemma:appendix_mu_error}.
\paragraph{Second term of \reff{estimate _appendix_A2}} One can view $l_g(s, \mu_t^{\alpha, W}, \pi^i)$ as a function of $s \in \Sc$ for any fixed $\mu_t^{\alpha, W}$ and $\pi^i$. Note that $|l_g(s, \mu_t^{\alpha, W}, \pi^i)|\leq M_g$, where $M_g$ is a constant independent of $\mu_t^{\alpha, W}$ and $\pi^i$. Since \reff{equ:appendix_state_error} holds at $t$, then 
\beqs
II &=&  \sum_{i=1}^N \int_{(\frac{i-1}{N}, \frac{i}{N}]}\Big|\E\big[l_g(s_t^i, \mu_t^{\alpha, W}, \pi^i)\big] - \E\big[l_g(s_t^\alpha, \mu_t^{\alpha, W}, \pi^i)\big]\Big|d\alpha\\
&\to & 0, \;\;\; \mbox{as}\; N \to \infty.
\enqs
\paragraph{Third term of \reff{estimate _appendix_A2}}
\beqs
III &=&  \sum_{i=1}^N \int_{(\frac{i-1}{N}, \frac{i}{N}]}\Big|\E\big[l_g(s_t^\alpha, \mu_t^{\alpha, W}, \pi^i)\big] - \E\big[l_g(s_t^\alpha, \mu_t^{\alpha, W}, \pi^\alpha)\big]\Big|d\alpha\\
&\leq & M_g  \sum_{i=1}^N \int_{(\frac{i-1}{N}, \frac{i}{N}]}\E\big[\|\pi^i(s_t^\alpha) - \pi^\alpha(s_t^\alpha)\|_1\big]d\alpha\\
&\leq & M_g L_{\Pi}  \sum_{i=1}^N \int_{(\frac{i-1}{N}, \frac{i}{N}]}\max_{\alpha \in (\frac{i-1}{N}, \frac{i}{N}]}|\frac{i}{N}-\alpha|d\alpha\\
&=& \mathcal{O}(\frac{1}{N}),
\enqs
where the second inequality is by the uniform boundedness of $g$ and the third inequality is from Assumption \ref{assm:pi}.
\end{proof}

Now we are ready to prove Theorem  \ref{thm:GMFC_approximate_pareto_property}.  We start by defining $\widehat r$ the aggregated reward over all possible actions under the policy $\pi$
\beqs
\widehat r(s, \mu, \pi): = \sum_{a \in \A} r(s, \mu, a) \pi(a|s).
\enqs

\begin{proof}[Proof of Theorem  \ref{thm:GMFC_approximate_pareto_property}]
\begin{eqnarray}
   & &  |V_N(\mu) - V(\mu)| \nonumber\\
   & & = \biggl|\sup_{\Pi^N}\frac{1}{N}\sum_{i=1}^N\mathbb{E} \left[\sum_{t=0}^\infty \gamma^t {r\big(s^i_t, \,\,{\mu^{i,W_N}_t}, \,\,a^i_t\big)}\right] - \sup_{{\pmb \pi \in {\pmb \Pi}}}\int_{\mathcal{I}} \E\left[\sum_{t=0}^\infty \gamma^t {r\big(s^\alpha_t, \,\,{\mu^{\alpha, W}_t}, \,\,a^\alpha_t\big)}\right] d\alpha\biggl| \nonumber\\
   & & \leq \sup_{{\pmb \pi \in \pmb \Pi}}\biggl|\frac{1}{N}\sum_{i=1}^N\mathbb{E} \left[\sum_{t=0}^\infty \gamma^t {r\big(s^i_t, \,\,{\mu^{i,W_N}_t}, \,\,a^i_t\big)}\right] - \int_{\mathcal{I}} \E\left[\sum_{t=0}^\infty \gamma^t {r\big(s^\alpha_t, \,\,{\mu^{\alpha, W}_t}, \,\,a^\alpha_t\big)}\right] d\alpha \biggl| \nonumber\\
   & & = \sup_{{\pmb \pi \in \pmb \Pi}} \biggl|\sum_{t=0}^\infty \gamma^t \sum_{i=1}^N \int_{(\frac{i-1}{N}, \frac{i}{N}]} \Big(\mathbb{E}\big[\widehat r(s^i_t, \,\,{\mu^{i,W_N}_t}, \,\,\pi^i)\big] -\mathbb{E}\big[\widehat r(s^\alpha_t, \,\,{\mu^{\alpha,W}_t}, \,\, \pi^\alpha) \big]\Big)d\alpha \biggl| \nonumber\\
       & & \leq  \sup_{{\pmb \pi \in \pmb \Pi}} \biggl|\sum_{t=0}^\infty \gamma^t \sum_{i=1}^N \int_{(\frac{i-1}{N}, \frac{i}{N}]} \Big(\mathbb{E}\big[\widehat r(s^i_t, \,\,{\mu^{i,W_N}_t}, \,\,\pi^i)\big] -\mathbb{E}\big[\widehat r(s^i_t, \,\,{\mu^{\alpha,W}_t}, \,\, \pi^i) \big]\Big)d\alpha \biggl| \nonumber \\
   & & \;\;\; \; +\; \sup_{{\pmb \pi \in \pmb \Pi}} \biggl|\sum_{t=0}^\infty \gamma^t \sum_{i=1}^N \int_{(\frac{i-1}{N}, \frac{i}{N}]} \Big(\mathbb{E}\big[\widehat r(s^i_t, \,\,{\mu^{\alpha, W}_t}, \,\,\pi^i)\big] -\mathbb{E}\big[\widehat r(s^\alpha_t, \,\,{\mu^{\alpha, W}_t}, \,\, \pi^i) \big]\Big)d\alpha \biggl|  \nonumber\\
   & & \;\;\;\; + \; \sup_{{\pmb \pi \in \pmb \Pi}} \biggl|\sum_{t=0}^\infty \gamma^t \sum_{i=1}^N \int_{(\frac{i-1}{N}, \frac{i}{N}]} \Big(\mathbb{E}\big[\widehat r(s^\alpha_t, \,\,{\mu^{\alpha, W}_t}, \,\, \pi^i) \big] - \mathbb{E}\big[\widehat r(s^\alpha_t, \,\,{\mu^{\alpha, W}_t}, \,\, \pi^\alpha) \big]\Big)d\alpha \biggl|  \nonumber\\
   & & := I + II + III, \label{proof:estimate_error}
\end{eqnarray}
where we use \reff{equ:relation_piN_pmbpi} in the second inequality.
\paragraph{First term of \reff{proof:estimate_error}}
\beq
I &\leq& \sup_{{\pmb \pi}} L_{r} \sum_{t=0}^\infty \gamma^t \sum_{i=1}^N \int_{(\frac{i-1}{N}, \frac{i}{N}]} \mathbb{E}\|\mu^{i,W_N}_t-\mu^{\alpha, W}_t\|_{1}d\alpha \nonumber\\
&= & \mathcal{O}(\frac{1}{\sqrt{N}}), \label{proof:estimate_error_firstterm}
\enq
where the last equality is from Lemma \ref{lemma:appendix_mu_error} when the convergence in Assumption \ref{assm:WN} is at rate $\mathcal{O}(1/\sqrt{N})$.

\paragraph{Second term of \reff{proof:estimate_error}} From Lemma \ref{lemma:appendix_state_error}, we have $II = \mathcal{O}(\frac{1}{\sqrt{N}})$.

\paragraph{Third term of \reff{proof:estimate_error}}
\beqs
III &\leq& \sup_{{\pmb \pi}} \sum_{t=0}^\infty \gamma^t \sum_{i=1}^N \int_{(\frac{i-1}{N}, \frac{i}{N}]} \max_{s \in \Sc} \|\pi^i(s) - \pi^\alpha(s)\|_1 d\alpha\\
&\leq & L_{\Pi} \sup_{{\pmb \pi}} \sum_{t=0}^\infty \gamma^t \sum_{i=1}^N \int_{(\frac{i-1}{N}, \frac{i}{N}]} |\frac{i}{N} -\pi^\alpha| d\alpha\\
&=& \mathcal{O}(\frac{1}{N}).
\enqs
Therefore, combining $I$, $II$ and $III$ yields the desired result.
\end{proof}

\subsection{Proof of Theorem \ref{thm:GMFC_existence_pareto_optimality}}
First, we see that \reff{discretized GMFC: Q_function} corresponds to the following optimal control problem
\beq
& & \widetilde V^M(\tilde {\pmb \mu})
:=\sup_{\tilde{\pmb \pi} \in \widetilde {\pmb \Pi}_M}\tilde J^M(\tilde{\pmb \mu}, \tilde{\pmb \pi}) \nonumber\\
& & = \sup_{\tilde{\pmb \pi} \in \widetilde {\pmb \Pi}_M}\frac{1}{M} \sum_{m=1}^M \E\left.\left[\sum_{t=0}^\infty \gamma^t {r\big(\tilde s^{\alpha_m}_t, \, {\tilde \mu^{\alpha_m,W}_t}, \,\tilde a^{\alpha_m}_t\big)}\,\right|\tilde s^{\alpha_m}_0 \sim {\tilde \mu^{\alpha_m}},\, \tilde a_t^{\alpha_m} \sim \tilde \pi^{\alpha_m}(\cdot|\tilde s_t^{\alpha_m})\right]. \label{eq:discretized_GMFC_reward}
\enq

The associated Q function of \reff{eq:discretized_GMFC_reward} is defined as
\beq \label{deftildeQ}
\tilde Q(\tilde{\pmb \mu}, \tilde{\pmb \pi}) &=& \sup_{\tilde{\pmb \pi}'}\frac{1}{M} \sum_{m=1}^M \E\left.\left[\sum_{t=0}^\infty \gamma^t {r\big(\tilde s^{\alpha_m}_t, \,{\tilde \mu^{\alpha_m,W}_t}, \,\tilde a^{\alpha_m}_t\big)}\,\right|\, \tilde s^{\alpha_m}_0 \sim \tilde{\mu}^{\alpha_m}, \tilde a_0^{\alpha_m} \sim \tilde \pi^{\alpha_m}(\cdot|\tilde s_t^{\alpha_m})\right] \nonumber\\
&=&  R(\tilde{\pmb \mu}, \tilde{\pmb \pi})+ \sup_{\tilde{\pmb \pi}' \in \widetilde{\pmb \Pi}_M}\sum_{t=1}^\infty \gamma^t \tilde R(\tilde{\pmb \mu}_t, \tilde{\pmb \pi}'),
\enq
subject to $\tilde{\pmb \mu}_{t+1} = \widetilde{\pmb \Phi}(\tilde{\pmb \mu}_{t}, \tilde{\pmb \pi})$, $\tilde{\pmb \mu}_0 = \tilde{\pmb \mu}$.

\medskip

We first show the verification result and then prove the continuity property of $\tilde Q$ in \reff{deftildeQ}, which thus leads to Theorem \ref{thm:GMFC_existence_pareto_optimality}.

\begin{Lemma}[Verification]\label{lemma:verification} Assume Assumption \ref{assm:r}. Then $\tilde Q$ in \reff{deftildeQ} is the unique function satisfying the Bellman equation \reff{discretized GMFC: Q_function}. Furthermore, if there exists $\tilde{\pmb \pi}^* \in \arg\max_{\widetilde{\pmb \Pi}_M} \tilde Q(\tilde{\pmb\mu}, \tilde{\pmb \pi})$ for each $\tilde{\pmb\mu} \in \widetilde{\pmb \Mc}_M$, then $\tilde{\pmb \pi}^*$  is an optimal stationary policy ensemble.
\end{Lemma}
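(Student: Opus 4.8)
The plan is to read \reff{discretized GMFC: Q_function} as the Bellman fixed-point equation of a discounted deterministic MDP whose state space is $\widetilde{\pmb\Mc}_M$, whose control space is $\widetilde{\pmb\Pi}_M$, whose one-step reward is the bounded map $\widetilde R$, and whose transition $\widetilde{\pmb\Phi}$ is deterministic, and then to run the classical two-part verification argument: first show the associated Bellman operator is a contraction, so its fixed point is unique; then show, via a telescoping identity, that the greedy policy attains the value. The starting point is the a priori bound: under Assumption \ref{assm:r} we have $|r|\le M_r$, hence $|\widetilde R(\tilde{\pmb\mu},\tilde{\pmb\pi})|\le M_r$ for all $(\tilde{\pmb\mu},\tilde{\pmb\pi})$, so the series in \reff{deftildeQ} converges absolutely with $\|\tilde Q\|_\infty\le M_r/(1-\gamma)$. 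Thus $\tilde Q$ belongs to the Banach space $B$ of bounded real functions on $\widetilde{\pmb\Mc}_M\times\widetilde{\pmb\Pi}_M$ under the sup-norm.

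Next I would introduce the operator $\mathcal{T}:B\to B$ defined by
\[
(\mathcal{T}q)(\tilde{\pmb\mu},\tilde{\pmb\pi}) = \widetilde R(\tilde{\pmb\mu},\tilde{\pmb\pi}) + \gamma\sup_{\tilde{\pmb\pi}'\in\widetilde{\pmb\Pi}_M} q\big(\widetilde{\pmb\Phi}(\tilde{\pmb\mu},\tilde{\pmb\pi}),\tilde{\pmb\pi}'\big).
\]
Using $|\sup_x f-\sup_x g|\le\sup_x|f-g|$ one obtains $\|\mathcal{T}q_1-\mathcal{T}q_2\|_\infty\le\gamma\|q_1-q_2\|_\infty$, and since $\gamma\in(0,1)$, $\mathcal{T}$ is a contraction; Banach's fixed-point theorem then supplies a unique fixed point in $B$. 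The uniqueness assertion of the lemma follows once I verify that $\tilde Q$ itself satisfies \reff{discretized GMFC: Q_function}. This is the dynamic-programming step: I split off the time-zero reward $\widetilde R(\tilde{\pmb\mu},\tilde{\pmb\pi})$ in \reff{deftildeQ}, shift the time index by one, and use that the post-transition state is exactly $\widetilde{\pmb\Phi}(\tilde{\pmb\mu},\tilde{\pmb\pi})$ (no expectation, since $\widetilde{\pmb\Phi}$ is deterministic), so that the remaining tail supremum equals $\gamma\sup_{\tilde{\pmb\pi}'}\tilde Q(\widetilde{\pmb\Phi}(\tilde{\pmb\mu},\tilde{\pmb\pi}),\tilde{\pmb\pi}')$. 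Hence $\tilde Q=\mathcal{T}\tilde Q$ and coincides with the unique fixed point.

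For the optimality claim, let $\hat{\pmb\pi}(\tilde{\pmb\mu})\in\arg\max_{\tilde{\pmb\pi}}\tilde Q(\tilde{\pmb\mu},\tilde{\pmb\pi})$ and set $V^*(\tilde{\pmb\mu}):=\sup_{\tilde{\pmb\pi}}\tilde Q(\tilde{\pmb\mu},\tilde{\pmb\pi})=\tilde Q(\tilde{\pmb\mu},\hat{\pmb\pi}(\tilde{\pmb\mu}))$, which by the Bellman equation satisfies $V^*(\tilde{\pmb\mu})=\widetilde R(\tilde{\pmb\mu},\hat{\pmb\pi}(\tilde{\pmb\mu}))+\gamma V^*(\widetilde{\pmb\Phi}(\tilde{\pmb\mu},\hat{\pmb\pi}(\tilde{\pmb\mu})))$. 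Generating the trajectory $\tilde{\pmb\mu}_{t+1}=\widetilde{\pmb\Phi}(\tilde{\pmb\mu}_t,\hat{\pmb\pi}(\tilde{\pmb\mu}_t))$ and iterating yields $V^*(\tilde{\pmb\mu}_0)=\sum_{t=0}^{T-1}\gamma^t\widetilde R(\tilde{\pmb\mu}_t,\hat{\pmb\pi}(\tilde{\pmb\mu}_t))+\gamma^T V^*(\tilde{\pmb\mu}_T)$; letting $T\to\infty$ and using $\|V^*\|_\infty<\infty$ together with $\gamma^T\to0$ shows that the accumulated reward of the stationary feedback policy $\hat{\pmb\pi}$ equals $V^*$. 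The matching upper bound comes from the super-solution inequality $\tilde Q(\tilde{\pmb\mu},\tilde{\pmb\pi})\le\widetilde R(\tilde{\pmb\mu},\tilde{\pmb\pi})+\gamma V^*(\widetilde{\pmb\Phi}(\tilde{\pmb\mu},\tilde{\pmb\pi}))$ valid for every control, iterated along an arbitrary trajectory, showing no policy can exceed $V^*$. Hence $\hat{\pmb\pi}$ is optimal.

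I expect the dynamic-programming step — establishing that the $\tilde Q$ of \reff{deftildeQ} actually solves \reff{discretized GMFC: Q_function} — to be the only non-mechanical point, since it requires interchanging the tail supremum with the one-step decomposition. The determinism of $\widetilde{\pmb\Phi}$ is precisely what makes this interchange exact (there is no averaging over a random next state), and the boundedness of $\widetilde R$ from Assumption \ref{assm:r} is what guarantees that all the infinite sums converge and that the tail terms $\gamma^T V^*(\tilde{\pmb\mu}_T)$ vanish, so the passages to the limit in both the contraction and the verification steps are justified.
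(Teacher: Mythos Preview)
Your proposal is correct. The uniqueness half is identical to the paper's: both introduce the Bellman operator on the space of bounded functions, check it is a $\gamma$-contraction, invoke Banach's fixed-point theorem, and observe that $\tilde Q$ from \reff{deftildeQ} is a fixed point.

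For the optimality half you and the paper diverge slightly. The paper introduces, for each fixed $\tilde{\pmb\pi}'$, the policy-evaluation operator $(B^{\tilde{\pmb\pi}'}q)(\tilde{\pmb\mu},\tilde{\pmb\pi})=\widetilde R(\tilde{\pmb\mu},\tilde{\pmb\pi})+\gamma\,q(\widetilde{\pmb\Phi}(\tilde{\pmb\mu},\tilde{\pmb\pi}),\tilde{\pmb\pi}')$, notes it is also a $\gamma$-contraction with unique fixed point $\tilde Q^{\tilde{\pmb\pi}'}$, and then checks algebraically that $\tilde Q^{\tilde{\pmb\pi}^*}$ satisfies the same fixed-point equation as $\tilde Q$, so $\tilde Q^{\tilde{\pmb\pi}^*}=\tilde Q$. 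You instead roll out the Bellman identity along the greedy trajectory, obtain the finite-horizon telescoping sum $V^*(\tilde{\pmb\mu}_0)=\sum_{t<T}\gamma^t\widetilde R(\tilde{\pmb\mu}_t,\hat{\pmb\pi}(\tilde{\pmb\mu}_t))+\gamma^T V^*(\tilde{\pmb\mu}_T)$, and pass $T\to\infty$; then you argue the matching upper bound for any policy via the super-solution inequality. Both routes are classical and equivalent here; your telescoping argument is a touch more explicit and makes the role of boundedness and $\gamma<1$ transparent, while the paper's operator identity is terser and avoids the separate upper-bound step. Neither buys anything the other does not in this deterministic, bounded-reward setting.
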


The proof of Lemma \ref{lemma:verification} is standard and very similar to the proof of Proposition 3.3 in \cite{GGWX2020}.

\begin{proof}[Proof of Lemma \ref{lemma:verification}]First, define $\frac{M_r}{1-\gamma}$-bounded function space $\mathcal{Q}:=\{f: \widetilde{\pmb \Mc}_M \times \widetilde{\pmb \Pi}_M \to [-\frac{M_r}{1-\gamma}, \frac{M_r}{1-\gamma}]\}$. Then we define a Bellman operator $B: \mathcal{Q} \to \mathcal{Q}$
\beqs
(B q)(\tilde{\pmb \mu}, \tilde{\pmb\pi}): = \widetilde R(\tilde{\pmb \mu}, \tilde{\pmb\pi}) + \gamma \sup_{\tilde{\pmb\pi}' \in \widetilde{\pmb \Pi}_M}q(\widetilde {\pmb \Phi}(\tilde{\pmb \mu}, \tilde{\pmb\pi}), \tilde{\pmb\pi}'),
\enqs
One can show that $B$ is a contraction operator with the module-$\gamma$. By Banach fixed point theorem, $B$ admits a unique fixed point. As $\tilde Q$ function of \reff{deftildeQ} satisfies $B\tilde Q= \tilde Q$, $\tilde Q$ is unique solution of \reff{discretized GMFC: Q_function}.

We next define $B^{\tilde{\pmb \pi}'}: \mathcal{Q} \to \mathcal{Q}$ under the policy ensemble $\tilde {\pmb \pi}' \in \widetilde{\pmb \Pi}_M$ with
\beqs
(B^{\tilde{\pmb\pi}'} q)(\tilde{\pmb \mu}, \tilde{\pmb\pi}): = \widetilde R(\tilde{\pmb \mu}, \tilde{\pmb\pi}) + \gamma q(\widetilde{\pmb \Phi}(\tilde{\pmb \mu}, \tilde{\pmb\pi}), \tilde{\pmb\pi}').
\enqs
Similarly, we can show that $B^{\tilde{\pmb\pi}'}$ is a contraction map with the module-$\gamma$ and thus admits a unique fixed point, which is denoted as $\tilde Q^{\tilde{\pmb\pi}'}$. From this, we have
\beqs
\tilde Q^{\tilde{\pmb\pi}^*}(\tilde{\pmb\mu},  \tilde{\pmb\pi}) &=& \widetilde R(\tilde{\pmb \mu}, \tilde{\pmb\pi}) + \gamma \tilde Q^{\tilde{\pmb\pi}^*}(\tilde{\pmb \Phi}(\tilde{\pmb \mu}, \tilde{\pmb\pi}), \tilde{\pmb\pi}^*)\\
&=& \widetilde R(\tilde{\pmb \mu}, \tilde{\pmb\pi}) + \gamma \sup_{ \tilde{\pmb\pi}' \in \widetilde{\pmb \Pi}_M} \tilde Q(\widetilde{\pmb \Phi}(\tilde{\pmb \mu}, \tilde{\pmb\pi}), \tilde{\pmb\pi}') = \tilde Q(\tilde{\pmb\mu},  \tilde{\pmb\pi}),
\enqs
which implies $\tilde{\pmb\pi}^*$ is an optimal policy ensemble.
\end{proof}

\begin{Lemma}\label{lemma:continuityQ} Let Assumptions \ref{assm:P}, \ref{assm:r} hold. Assume further $\gamma \cdot (1 + L_P) < 1$. Then $\tilde Q$ in \reff{deftildeQ} is continuous.
\end{Lemma}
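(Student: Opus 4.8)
The plan is to exploit the Banach-fixed-point structure already set up in the proof of Lemma \ref{lemma:verification}: $\tilde Q$ is the unique fixed point of the $\gamma$-contraction $B$ on the complete space $\mathcal{Q}$ of $\frac{M_r}{1-\gamma}$-bounded functions. Rather than attacking continuity of $\tilde Q$ head-on, I would first show that $\tilde Q$ is Lipschitz in its measure argument $\tilde{\pmb \mu}$ by checking that the subset of $\mathcal{Q}$ consisting of functions that are $L$-Lipschitz in $\tilde{\pmb \mu}$ (uniformly in $\tilde{\pmb \pi}$) is nonempty, closed under the sup-norm, and invariant under $B$; a standard argument then forces the unique fixed point into this subset. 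Once Lipschitz continuity in $\tilde{\pmb \mu}$ is established, I would upgrade it to joint continuity in $(\tilde{\pmb \mu}, \tilde{\pmb \pi})$ by reading off the Bellman equation $\tilde Q = B\tilde Q$ one final time.

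The two regularity inputs needed are Lipschitz estimates for $\widetilde R$ and $\widetilde{\pmb \Phi}$ on the compact, finite-dimensional space $\widetilde{\pmb \Mc}_M \times \widetilde{\pmb \Pi}_M$, measured in the block-averaged norm $\|\tilde{\pmb \mu}_1 - \tilde{\pmb \mu}_2\| = \frac{1}{M}\sum_{m}\|\tilde\mu_1^{\alpha_m} - \tilde\mu_2^{\alpha_m}\|_1$. Since $\Sc$ and $\Ac$ are finite and each $\tilde\mu^{\alpha_m,W}$ is an average of the $\tilde\mu^{\alpha_{m'}}$ with weights $W \le 1$, Assumptions \ref{assm:r} and \ref{assm:P} readily give that both $\widetilde R$ and $\widetilde{\pmb \Phi}$ are Lipschitz in each argument. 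The crucial quantitative point is the $\tilde{\pmb \mu}$-Lipschitz constant of $\widetilde{\pmb \Phi}$: splitting $\widetilde{\pmb \Phi}^{\alpha_m}(\tilde{\pmb \mu}_1, \tilde{\pmb \pi}) - \widetilde{\pmb \Phi}^{\alpha_m}(\tilde{\pmb \mu}_2, \tilde{\pmb \pi})$ into a term where only the kernel $P$ changes (bounded by $L_P\|\tilde\mu_1^{\alpha_m,W} - \tilde\mu_2^{\alpha_m,W}\|_1$ via Assumption \ref{assm:P}, which the averaging and $W \le 1$ contract back to $\|\tilde{\pmb \mu}_1 - \tilde{\pmb \mu}_2\|$) and a term where only the weight $\tilde\mu^{\alpha_m}(s)$ changes (bounded by $\|\tilde\mu_1^{\alpha_m} - \tilde\mu_2^{\alpha_m}\|_1$ since $P(\cdot|s,a,\mu)$ is a probability measure), one obtains $\|\widetilde{\pmb \Phi}(\tilde{\pmb \mu}_1, \tilde{\pmb \pi}) - \widetilde{\pmb \Phi}(\tilde{\pmb \mu}_2, \tilde{\pmb \pi})\| \le (1 + L_P)\|\tilde{\pmb \mu}_1 - \tilde{\pmb \mu}_2\|$. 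This is exactly where the hypothesis $\gamma(1+L_P) < 1$ enters.

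With these estimates, for a function $q$ that is $L$-Lipschitz in $\tilde{\pmb \mu}$ and a fixed $\tilde{\pmb \pi}$, the elementary bound $|\sup_{\tilde{\pmb \pi}'} f - \sup_{\tilde{\pmb \pi}'} h| \le \sup_{\tilde{\pmb \pi}'}|f - h|$ yields $|(Bq)(\tilde{\pmb \mu}_1, \tilde{\pmb \pi}) - (Bq)(\tilde{\pmb \mu}_2, \tilde{\pmb \pi})| \le [L_{\widetilde R} + \gamma(1+L_P)L]\,\|\tilde{\pmb \mu}_1 - \tilde{\pmb \mu}_2\|$, with $L_{\widetilde R}$ the $\tilde{\pmb \mu}$-Lipschitz constant of $\widetilde R$. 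Choosing $L := L_{\widetilde R}/(1 - \gamma(1+L_P))$, finite precisely because $\gamma(1+L_P)<1$, makes the bracket at most $L$, so the set of such functions is $B$-invariant; being nonempty and sup-norm closed (a uniform limit of $L$-Lipschitz functions is $L$-Lipschitz), it must contain the fixed point $\tilde Q$, so $\tilde Q$ is $L$-Lipschitz in $\tilde{\pmb \mu}$. Feeding this back into $\tilde Q = B\tilde Q$ and now varying $\tilde{\pmb \pi}$, the $\tilde{\pmb \pi}$-Lipschitz continuity of $\widetilde R$ and $\widetilde{\pmb \Phi}$ together with the $\tilde{\pmb \mu}$-Lipschitz bound on $\tilde Q$ gives joint Lipschitz continuity, hence continuity, of $\tilde Q$. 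The main obstacle is the sharp estimate of the second paragraph: showing that the $\tilde{\pmb \mu}$-Lipschitz constant of $\widetilde{\pmb \Phi}$ is no larger than $1 + L_P$, since any looser constant would destroy the $B$-invariance of the Lipschitz class and with it the entire scheme.
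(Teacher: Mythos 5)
Your proposal is correct, and it reaches the same conclusion (in fact a slightly sharper one, an explicit Lipschitz modulus for $\tilde Q$) by a genuinely different route. The paper works directly with the series representation \reff{deftildeQ}: it bounds $|\tilde Q(\tilde{\pmb \mu}_n, \tilde{\pmb \pi}_n) - \tilde Q(\tilde{\pmb \mu}, \tilde{\pmb \pi})|$ term by term, propagates the perturbation of the state-distribution trajectory through the dynamics by induction to get $\int_{\mathcal{I}}\|\tilde\mu^{\alpha}_t - \tilde\mu^{\alpha}_{n,t}\|_1 d\alpha \leq (1+L_P)^{t-1}\int_{\mathcal{I}}\|\tilde\mu^{\alpha}_1 - \tilde\mu^{\alpha}_{n,1}\|_1 d\alpha$, and then sums the geometric series $\sum_t \gamma^t (1+L_P)^{t-1}$, which converges precisely under $\gamma(1+L_P)<1$. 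You instead stay at the level of the Bellman operator $B$ and run the standard invariant-closed-subset argument for the Banach fixed point: the class of functions in $\mathcal{Q}$ that are $L$-Lipschitz in $\tilde{\pmb \mu}$ is nonempty, sup-norm closed, and $B$-invariant once $L \geq L_{\widetilde R}/(1-\gamma(1+L_P))$, so the fixed point inherits the Lipschitz bound; joint continuity then follows from one more pass through $\tilde Q = B\tilde Q$. Both arguments hinge on exactly the same two quantitative inputs --- the one-step Lipschitz constant $1+L_P$ of $\widetilde{\pmb \Phi}$ in the block-averaged $L^1$ norm (your computation of this is right, and you correctly identify it as the load-bearing estimate) and the smallness condition $\gamma(1+L_P)<1$ --- so neither approach is weaker in hypotheses. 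What yours buys is the avoidance of the trajectory induction and a closed-form Lipschitz constant; what the paper's buys is a self-contained estimate that does not require verifying closedness and invariance of the Lipschitz subclass. Either is acceptable.
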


\begin{proof}[Proof of Lemma \ref{lemma:continuityQ}]We will show that as $\tilde{\pmb \mu}_n \to \tilde{\pmb \mu}$, $\tilde{\pmb \pi}_n \to \tilde{\pmb \pi}$ in the sense that $\int_{\mathcal{I}}\|{\tilde \mu^{\alpha}} - {\tilde\mu^{\alpha}}_n\|_1d\alpha + \int_{\mathcal{I}}\max_{s \in \Sc}\|{\tilde \pi^{\alpha}} - {\tilde\pi^{\alpha}}_n\|_1d\alpha \to 0$,
\beqs
\tilde Q(\tilde{\pmb \mu}_n, \tilde{\pmb \pi}_n) \to \tilde Q(\tilde{\pmb \mu}, \tilde{\pmb \pi}).
\enqs
From \reff{deftildeQ},
\beqs
& & |\tilde Q(\tilde{\pmb \mu}_n, \tilde{\pmb \pi}_n) -\tilde Q(\tilde{\pmb \mu}, \tilde{\pmb \pi})|\\
&\leq & \Big|\widetilde R(\tilde{\pmb \mu}, \tilde{\pmb \pi})+ \sup_{\tilde{\pmb \pi}' \in \widetilde {\pmb \Pi}_M}\sum_{t=1}^\infty \gamma^t \tilde R(\tilde{\pmb \mu}_t, \tilde{\pmb \pi}') - \widetilde R(\tilde{\pmb \mu}_n, \tilde{\pmb \pi}_n)+ \sup_{\tilde{\pmb \pi}' \in \widetilde {\pmb \Pi}_M}\sum_{t=1}^\infty \gamma^t \tilde R(\tilde{\pmb \mu}_{n, t}, \tilde{\pmb \pi}')\Big|\\
&\leq & \Big|\widetilde R(\tilde{\pmb \mu}, \tilde{\pmb \pi})- \widetilde R(\tilde{\pmb \mu}_n, \tilde{\pmb \pi}_n)\Big| + \sup_{\tilde{\pmb \pi}' \in \widetilde{\pmb \Pi}_M}\sum_{t=1}^\infty \gamma^t\Big|\tilde R(\tilde{\pmb \mu}_{n, t}, \tilde{\pmb \pi}') - \tilde R(\tilde{\pmb \mu}_t, \tilde{\pmb \pi}')\Big|\\
&\leq & L_r \cdot \int_{\mathcal{I}}\|{\tilde \mu^{\alpha, W}} - {\tilde\mu^{\alpha, W}}_n\|_1d\alpha + M_r \cdot \int_{\mathcal{I}}\|{\tilde \mu^{\alpha}} - {\tilde\mu^{\alpha}}_n\|_1d\alpha + M_r \cdot \int_{\mathcal{I}}\max_{s \in \Sc}\|{\tilde \pi^{\alpha}} - {\tilde\pi^{\alpha}}_n\|_1d\alpha \\
& &   + \sup_{\tilde{\pmb \pi}' \in \widetilde{\pmb \Pi}_M}\sum_{t=1}^\infty \gamma^t \cdot \Big(L_r \cdot \int_{\mathcal{I}}\|{\tilde \mu^{\alpha, W}}_t - {\tilde\mu^{\alpha, W}}_{n, t}\|_1d\alpha + M_r \cdot \int_{\mathcal{I}}\|{\tilde \mu^{\alpha}}_t - {\tilde\mu^{\alpha}}_{n, t}\|_1d\alpha\Big)\\
&\leq & \big(L_r + M_r\big) \cdot \int_{\mathcal{I}}\|{\tilde \mu^{\alpha}} - {\tilde\mu^{\alpha}}_n\|_1d\alpha + M_r \cdot \int_{\mathcal{I}}\max_{s \in \Sc}\|{\tilde \pi^{\alpha}} - {\tilde\pi^{\alpha}}_n\|_1d\alpha\\
& & + \sup_{\tilde{\pmb \pi}' \in \widetilde{\pmb \Pi}_M}\sum_{t=1}^\infty \gamma^t \cdot \big(L_r + M_r\big) \cdot \int_{\mathcal{I}}\|{\tilde \mu^{\alpha}}_t - {\tilde\mu^{\alpha}}_{n, t}\|_1d\alpha.
\enqs
By induction, we obtain
\beqs
\int_{\mathcal{I}}\|{\tilde \mu^{\alpha}}_t - {\tilde\mu^{\alpha}}_{n, t}\|_1d\alpha \leq (L_P + 1)\cdot \int_{\mathcal{I}}\|{\tilde \mu^{\alpha}}_{t-1} - {\tilde\mu^{\alpha}}_{n, {t-1}}\|_1d\alpha \leq \ldots \leq (L_P + 1)^{(t-1)} \int_{\mathcal{I}}\|{\tilde \mu^{\alpha}}_1 - {\tilde\mu^{\alpha}}_{n, 1}\|_1d\alpha.
\enqs
Therefore, if $\gamma\cdot(1 + L_P) < 1$, then
\beqs
 |\tilde Q(\tilde{\pmb \mu}_n, \tilde{\pmb \pi}_n) -\tilde Q(\tilde{\pmb \mu}, \tilde{\pmb \pi})| \leq C \Big(\int_{\mathcal{I}}\|{\tilde \mu^{\alpha}} - {\tilde\mu^{\alpha}}_n\|_1d\alpha + \int_{\mathcal{I}}\max_{s \in \Sc}\|{\tilde \pi^{\alpha}} - {\tilde\pi^{\alpha}}_n\|_1d\alpha\Big).
\enqs
where $C$ is a constant depending on $L_r, M_r, L_P$.
\end{proof}

Now we prove Theorem \ref{thm:GMFC_existence_pareto_optimality}.
\begin{proof}[Proof of Theorem \ref{thm:GMFC_existence_pareto_optimality}]
By Lemma \ref{lemma:continuityQ}, along with the compactness of $\widetilde{\pmb \Pi}_M$, there exists $\tilde{\pmb \pi}^* \in \widetilde{\pmb \Pi}_M$ such that $\tilde{\pmb \pi}^*\in \argmax_{\tilde{\pmb\pi} \in \widetilde{\pmb \Pi}_M} Q(\tilde{\pmb \mu}, \tilde{\pmb\pi})$. By Lemma \ref{lemma:verification}, there exists an optimal policy ensemble $\tilde{\pmb \pi}^* \in \widetilde{\pmb \Pi}_M$.
\end{proof}

\subsection{Proof of Theorem \ref{thm:discretized_GMFC_approximate_pareto_property}}
We first prove the following Lemma, which shows that GMFC and block GMFC become increasingly close to each other as the number of blocks becomes larger.
\begin{Lemma}\label{lemma:appendix_M_approximation} Under Assumptions \ref{assm:W}, \ref{assm:P} and \ref{assm:pi}, we have
\beqs
& &\sum_{m=1}^M\int_{(\frac{m-1}{M}, \frac{m}{M}]}\|\mu^{\alpha, W}_t - \tilde\mu^{\alpha_m, W}_t\|_1 d\alpha \leq \Big[(1 + L_P)^t - 1\Big] \frac{L_\Pi + 2L_PL_W + L_W}{M} + \frac{2L_W}{M},\\
& & \sum_{m=1}^M\int_{(\frac{m-1}{M}, \frac{m}{M}]}\|\mu^{\alpha}_t - \tilde\mu^{\alpha_m}_t\|_1 d\alpha \leq \Big[(1 + L_P)^t - 1\Big] \frac{L_\Pi + 2L_PL_W + L_W}{M}.
\enqs
\end{Lemma}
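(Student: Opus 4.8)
The plan is to prove both inequalities together by induction on $t$, controlling the state-distribution error
\[
A_t := \sum_{m=1}^M\int_{(\frac{m-1}{M},\frac{m}{M}]}\|\mu^\alpha_t - \tilde\mu^{\alpha_m}_t\|_1\,d\alpha
\]
(the left-hand side of the second inequality) jointly with the neighborhood mean-field error $B_t$ (the left-hand side of the first). Here the block policy $\tilde{\pmb \pi}$ is the restriction $\pi^{\alpha_m}$ of the Lipschitz ensemble to the grid, so Assumption \ref{assm:pi} applies. At $t=0$ both systems start from the common initial law, hence $A_0=0$ and, by the reduction below, $B_0\le 2L_W/M$; both stated bounds then hold since $(1+L_P)^0-1=0$.

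The first ingredient is a purely spatial estimate reducing $B_t$ to $A_t$. For $\alpha\in(\frac{m-1}{M},\frac{m}{M}]$ I would write the difference $\mu_t^{\alpha,W}-\tilde\mu_t^{\alpha_m,W}$ blockwise and insert the intermediate weight $W(\alpha,\beta)\tilde\mu_t^{\alpha_{m'}}$, splitting it into a term weighted by $W(\alpha,\beta)\le1$ that reproduces $A_t$ and a term governed by $|W(\alpha,\beta)-W(\alpha_m,\alpha_{m'})|$. Assumption \ref{assm:W} with $|\alpha-\alpha_m|\le 1/M$ and $|\beta-\alpha_{m'}|\le 1/M$, together with $\|\tilde\mu_t^{\alpha_{m'}}\|_1=1$, then gives the key reduction $B_t\le A_t+2L_W/M$, which is exactly the discrepancy between the two target bounds.

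The second ingredient is the one-step propagation estimate. For $\alpha$ in block $m$ I would expand $\mu^\alpha_{t+1}-\tilde\mu^{\alpha_m}_{t+1}$ through the reformulated dynamics \reff{GFC:aggregated_dynamics} and \reff{notation:discretized_GMFC_Phi}, and telescope it into three differences: the transition kernels $P(\cdot|s,\mu_t^{\alpha,W},a)-P(\cdot|s,\tilde\mu_t^{\alpha_m,W},a)$, bounded by $L_P\|\mu_t^{\alpha,W}-\tilde\mu_t^{\alpha_m,W}\|_1$ via Assumption \ref{assm:P}; the policies $\pi^\alpha(a|s)-\pi^{\alpha_m}(a|s)$, bounded by $L_\Pi/M$ via Assumption \ref{assm:pi}; and the residual state laws $\mu^\alpha_t(s)-\tilde\mu^{\alpha_m}_t(s)$. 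Because $\|P(\cdot|\cdots)\|_1=1$ and the policies and measures each sum to one, integrating over $\alpha$ and summing over $m$ collapses all normalizing factors and, after substituting $B_t\le A_t+2L_W/M$, yields a scalar linear recursion $A_{t+1}\le(1+L_P)A_t+\frac{c}{M}$ with a constant $c$ assembled from $L_\Pi$, $L_PL_W$ and $L_W$. Unrolling from $A_0=0$ produces the geometric factor $(1+L_P)^t-1$ multiplied by an $O(1/M)$ constant built from $L_\Pi$, $L_PL_W$ and $L_W$, matching the stated bound; feeding the result back into $B_t\le A_t+2L_W/M$ gives the first inequality.

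The main obstacle is the full coupling across labels: the mean-field argument $\mu_t^{\alpha,W}=\int_{\mathcal I}W(\alpha,\beta)\mu_t^\beta\,d\beta$ of agent $\alpha$ depends on the whole ensemble, so the per-agent errors cannot be closed individually. This is why the estimate must be carried as an integral over $\alpha$, and why the $W$-Lipschitz reduction $B_t\le A_t+2L_W/M$ is essential: it is the device that re-expresses the neighborhood error in terms of the single scalar $A_t$ and lets the induction close. The remaining work is careful bookkeeping of the three Lipschitz constants through the geometric accumulation; note that no discount condition is needed for this lemma, since the estimate is stated at a fixed horizon $t$, whereas $\gamma(1+L_P)<1$ enters only when these bounds are later summed against $\gamma^t$.
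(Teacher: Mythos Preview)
Your proposal is correct and follows essentially the same approach as the paper: define a scalar error $A_t$, establish the spatial reduction $B_t\le A_t+2L_W/M$ from the Lipschitz graphon, derive the one-step recursion $A_{t+1}\le(1+L_P)A_t+c/M$ by telescoping the dynamics through Assumptions \ref{assm:W}, \ref{assm:P}, \ref{assm:pi}, and unroll. The only cosmetic difference is that the paper defines $A_t$ as the two-piece sum $\sum_m\int\|\mu^\alpha_t-\mu^{\alpha_m}_t\|_1\,d\alpha+\frac{1}{M}\sum_m\|\mu^{\alpha_m}_t-\tilde\mu^{\alpha_m}_t\|_1$, routing through the intermediate GMFC state $\mu^{\alpha_m}_t$ and an auxiliary $\bar\mu^{\alpha_m,W}=\frac{1}{M}\sum_{m'}W(\alpha_m,\alpha_{m'})\mu^{\alpha_{m'}}$, whereas you compare $\mu^\alpha_t$ to $\tilde\mu^{\alpha_m}_t$ directly; your version is slightly more streamlined but the argument is the same.
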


\begin{proof}[Proof of Lemma \ref{lemma:appendix_M_approximation}]
\beq \label{blockGMFC_GMFC_graphonmu}
& & \sum_{m=1}^M\int_{(\frac{m-1}{M}, \frac{m}{M}]}\|\mu^{\alpha, W}_t - \tilde\mu^{\alpha_m, W}_t\|_1 d\alpha\\
& \leq & \sum_{m=1}^M\int_{(\frac{m-1}{M}, \frac{m}{M}]}\|\mu^{\alpha, W}_t - \mu^{\alpha_m, W}_t\|_1 d\alpha + \frac{1}{M}\sum_{m=1}^M\|\mu^{\alpha_m, W}_t - \bar\mu^{\alpha_m, W}_t\|_1   \nonumber \\
& & \;\;+\; \frac{1}{M}\sum_{m=1}^M \|\bar\mu^{\alpha_m, W}_t - \tilde\mu^{\alpha_m, W}_t\|_1,  \nonumber 
\enq 
where $\bar\mu^{\alpha_m, W}: = \frac{1}{M} \sum_{m'=1}^M W(\alpha_m, \alpha_{m'}) \mu^{\alpha_{m'}}$.\\
By the definition of $\mu_t^{\alpha, W}, \mu_t^{\alpha_m, W}$ in \reff{notation:weighted_neighborhood}, $\tilde\mu_t^{\alpha_m, W}$ in \reff{notation:discretized_GMFC} and $\bar\mu^{\alpha_m, W}$, together with the Lipschitz continuity of $W$ in Assumption \ref{assm:W},
\beqs 
\sum_{m=1}^M\int_{(\frac{m-1}{M}, \frac{m}{M}]}\|\mu^{\alpha, W}_t - \mu^{\alpha_m, W}_t\|_1 d\alpha  &\leq& \sum_{m=1}^M\int_{(\frac{m-1}{M}, \frac{m}{M}]}\|\mu^{\alpha}_t - \mu^{\alpha_m}_t\|_1d\alpha +\frac{L_W}{M},\\
\frac{1}{M}\sum_{m=1}^M\|\mu^{\alpha_m, W}_t - \bar\mu^{\alpha_m, W}_t\|_1  &\leq& \frac{L_W}{M},\\
 \frac{1}{M}\sum_{m=1}^M \|\bar\mu^{\alpha_m, W}_t - \tilde\mu^{\alpha_m, W}_t\|_1 & \leq & \frac{1}{M}   \sum_{m=1}^M \|\mu_t^{\alpha_m} - \tilde \mu_t^{\alpha_m}\|_1.
\enqs
Plugging these into \reff{blockGMFC_GMFC_graphonmu}, 
\beqs
& & \sum_{m=1}^M\int_{(\frac{m-1}{M}, \frac{m}{M}]}\|\mu^{\alpha, W}_t - \tilde\mu^{\alpha_m, W}_t\|_1 d\alpha \leq A_t + \frac{2L_W}{M},
\enqs 
where $A_t: = \sum_{m=1}^M\int_{(\frac{m-1}{M}, \frac{m}{M}]}\|\mu^{\alpha}_t - \mu^{\alpha_m}_t\|_1 d\alpha + \frac{1}{M} \sum_{m=1}^M \|\mu_t^{\alpha_m} - \tilde \mu_t^{\alpha_m}\|_1$.

On the other hand, 
\beqs
& & \sum_{m=1}^M\int_{(\frac{m-1}{M}, \frac{m}{M}]}\|\mu^{\alpha}_t - \tilde\mu^{\alpha_m}_t\|_1d\alpha\\
& \leq & \sum_{m=1}^M\int_{(\frac{m-1}{M}, \frac{m}{M}]}\|\mu^{\alpha}_t - \mu^{\alpha_m}_t\|_1 d\alpha + \frac{1}{M} \sum_{m=1}^M \|\mu_t^{\alpha_m} - \tilde \mu_t^{\alpha_m}\|_1 = A_t.
\enqs
Therefore, it is enough to estimate $A_t$. We next estimate $A_{t + 1}$ by an inductive way. Note that $A_0=0$.
\beqs
& & A_{t +1} \\
&=& \sum_{m=1}^M\int_{(\frac{m-1}{M}, \frac{m}{M}]}\|\mu^{\alpha}_{t+1} - \mu^{\alpha_m}_{t+1}\|_1d\alpha + \frac{1}{M} \sum_{m=1}^M \|\mu_{t+1}^{\alpha_m} - \tilde \mu_{t+1}^{\alpha_m}\|_1 \\
&=& \sum_{m=1}^M \int_{(\frac{m-1}{M}, \frac{m}{M}]}\Big\|\sum_{s\in \Sc}\sum_{a \in \A} \Big(P(\cdot|s, \mu_t^{\alpha, W}, a)\pi^{\alpha}(a|s) \mu_t^{\alpha}(s)- P(\cdot|s, a, \mu^{\alpha_m, W}_t) \mu_t^{\alpha_m}(s) \pi^{\alpha_m}(a|s)\Big)\Big\|_1d\alpha\\
& & + \frac{1}{M} \sum_{m=1}^M \Big\|\sum_{s\in \Sc}\sum_{a \in \A} \Big(P(\cdot|s, \mu_t^{\alpha_m, W}, a)\pi^{\alpha_m}(a|s) \mu_t^{\alpha_m}(s)- P(\cdot|s, a, \tilde\mu^{\alpha_m, W}_t) \tilde\mu_t^{\alpha_m}(s) \tilde\pi^{\alpha_m}(a|s)\Big)\Big\|_1\\
&\leq&  \sum_{m=1}^M \int_{(\frac{m-1}{M}, \frac{m}{M}]} \Big(L_P \cdot \|\mu_t^{\alpha, W} - \mu^{\alpha_m, W}\|_1 + \frac{L_{\Pi}}{M} + \|\mu_t^{\alpha} - \mu_t^{\alpha_m}\|_1\Big) d\alpha\\
& & \; + \; \frac{1}{M} \sum_{m=1}^M \Big(L_P \cdot \|\mu_t^{\alpha_m, W} - \tilde\mu^{\alpha_m, W}\|_1 + \|\mu_t^{\alpha_m} - \tilde\mu_t^{\alpha_m}\|_1\Big)\\
&\leq & (1 + L_P) A_t + (L_\Pi + 2L_PL_W + L_W)\frac{1}{M},
\enqs
where the second equality is from \reff{GFC:aggregated_dynamics} and \reff{notation:discretized_GMFC_Phi}, and we use Assumptions \ref{assm:W}, \ref{assm:P} and \ref{assm:pi} in the third inequality.\\
By induction, we have
\beqs
A_{t + 1} \leq \Big[(1 + L_P)^t - 1\Big] \frac{L_\Pi + 2L_PL_W + L_W}{M}.
\enqs
\end{proof}

Based on Lemma \ref{lemma:appendix_M_approximation}, we have the following Proposition.
\begin{Proposition}\label{prop:discretizedGMFC_M_approximation} Assume Assumptions \ref{assm:W}, \ref{assm:P}, \ref{assm:r}, \ref{assm:pi}, and $\gamma \cdot (L_P + 1) < 1$. Then we have for any $\mu \in \Pc(\Sc)$
\beq \label{appendix:M_equivalent_class_approximate}
\sup_{{\pmb \pi} \in {\pmb \Pi}} \big|\tilde J^{M}(\mu, \pmb \pi) - J(\mu, {\pmb \pi})\big|  \to 0, \;\;\; \mbox{as}\; M \to + \infty,
\enq
where $\tilde J^{M}$ and $J$ are given in \reff{eq:discretized_GMFC_reward} and \reff{eq:GMFC_reward}, respectively. 
\end{Proposition}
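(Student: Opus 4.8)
The plan is to exploit the reformulation of both objectives as discounted sums of aggregated rewards and then control the per-step discrepancy using Lemma \ref{lemma:appendix_M_approximation}. By the reformulation underlying Theorem \ref{thm:GMFC_reformulate} and by \reff{eq:discretized_GMFC_reward}--\reff{deftildeQ}, for a fixed policy ensemble ${\pmb \pi} \in {\pmb \Pi}$ we may write $J(\mu, {\pmb \pi}) = \sum_{t=0}^\infty \gamma^t R({\pmb \mu}_t, {\pmb \pi})$ and $\tilde J^{M}(\mu, {\pmb \pi}) = \sum_{t=0}^\infty \gamma^t \widetilde R(\tilde{\pmb \mu}_t, {\pmb \pi})$, where ${\pmb \mu}_t$ and $\tilde{\pmb \mu}_t$ evolve under \reff{GFC:aggregated_dynamics} and \reff{notation:discretized_GMFC_Phi} respectively from the common initialization $\mu$, and the block dynamics uses the restricted policy $\pi^{\alpha_m}$ on block $m$ (consistent with the setup of Lemma \ref{lemma:appendix_M_approximation}). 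Hence $|\tilde J^{M}(\mu,{\pmb\pi}) - J(\mu,{\pmb\pi})| \le \sum_{t=0}^\infty \gamma^t |\widetilde R(\tilde{\pmb\mu}_t,{\pmb\pi}) - R({\pmb\mu}_t,{\pmb\pi})|$, so it suffices to bound each per-step difference uniformly in ${\pmb\pi}$.

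The core step is to estimate $|R({\pmb\mu}_t,{\pmb\pi})-\widetilde R(\tilde{\pmb\mu}_t,{\pmb\pi})|$. Writing $\frac1M\sum_{m=1}^M = \sum_{m=1}^M \int_{(\frac{m-1}{M},\frac{m}{M}]}d\alpha$, I would express the difference as one integral over $\mathcal{I}$ of $\sum_{s,a}\big[r(s,a,\mu_t^{\alpha,W})\pi^\alpha(a|s)\mu_t^\alpha(s) - r(s,a,\tilde\mu_t^{\alpha_m,W})\pi^{\alpha_m}(a|s)\tilde\mu_t^{\alpha_m}(s)\big]$, then peel off one factor at a time by the triangle inequality into three contributions: a reward term controlled by the Lipschitz continuity of $r$ in Assumption \ref{assm:r}, giving $L_r\sum_m\int\|\mu_t^{\alpha,W}-\tilde\mu_t^{\alpha_m,W}\|_1 d\alpha$; a state-measure term bounded via $|r|\le M_r$ by $M_r\sum_m\int\|\mu_t^\alpha-\tilde\mu_t^{\alpha_m}\|_1 d\alpha$; and a policy term bounded via $|r|\le M_r$ and Assumption \ref{assm:pi}, giving $M_r L_\Pi/M$ after using $|\alpha-\alpha_m|\le 1/M$. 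Applying the two bounds of Lemma \ref{lemma:appendix_M_approximation} to the first two contributions yields
\beqs
|R({\pmb\mu}_t,{\pmb\pi})-\widetilde R(\tilde{\pmb\mu}_t,{\pmb\pi})| \le (L_r+M_r)\big[(1+L_P)^t-1\big]\frac{L_\Pi+2L_PL_W+L_W}{M} + \frac{2L_rL_W + M_rL_\Pi}{M},
\enqs
which holds uniformly over ${\pmb\pi}\in{\pmb\Pi}$ since every constant is independent of the chosen policy.

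Finally I would sum over $t$ against the discount $\gamma^t$. The decisive observation is
\beqs
\sum_{t=0}^\infty \gamma^t\big[(1+L_P)^t-1\big] = \frac{1}{1-\gamma(1+L_P)} - \frac{1}{1-\gamma},
\enqs
which is finite precisely because of the standing hypothesis $\gamma(1+L_P)<1$; combined with $\sum_t\gamma^t = 1/(1-\gamma)$ this produces a bound of the form $|\tilde J^{M}(\mu,{\pmb\pi})-J(\mu,{\pmb\pi})| \le C/M$ for a constant $C=C(L_r,M_r,L_P,L_W,L_\Pi,\gamma)$ independent of ${\pmb\pi}$ and $\mu$. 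Taking the supremum over ${\pmb\pi}\in{\pmb\Pi}$ and letting $M\to\infty$ yields \reff{appendix:M_equivalent_class_approximate}. I expect the main obstacle to be the bookkeeping in the triangle-inequality decomposition of the reward difference, in particular matching each of the three residual factors with the correct assumption and with the correct bound from Lemma \ref{lemma:appendix_M_approximation}, rather than any conceptual difficulty; the summability that makes the argument close is guaranteed directly by $\gamma(1+L_P)<1$.
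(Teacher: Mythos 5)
Your proposal is correct and follows essentially the same route as the paper: both reformulate $J$ and $\tilde J^M$ as discounted sums of aggregated rewards, peel the per-step difference apart by the triangle inequality into a reward-Lipschitz term, a state-measure term, and a policy term, invoke Lemma \ref{lemma:appendix_M_approximation} for the first two, and close the geometric sum using $\gamma(1+L_P)<1$. The only cosmetic difference is that the paper groups the policy-projection error as a separate discounted sum rather than folding it into the per-step estimate.
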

\begin{proof}[Proof of Proposition \ref{prop:discretizedGMFC_M_approximation}]  Recall from \reff{notation:discretized_GMFC} that
\beqs
\widetilde J^M({\mu}, \tilde{\pmb \pi}) &=& \sum_{t=0}^\infty \gamma^t \widetilde R(\tilde{\pmb \mu}_t, \tilde{\pmb \pi}),
\enqs
subject to $\tilde \mu_{t + 1}^{\alpha_m} = \widetilde {\pmb \Phi}^{\alpha_m}(\tilde\mu_t^{\alpha_m}, \tilde\pi^{\alpha_m})$, $t \in \N_{+}$, $\tilde\mu_0^\alpha \equiv \mu$, and  $\tilde\mu^{\alpha_m,W}_t$  given in \reff{notation:discretized_GMFC}.
\beqs
J(\mu, {\pmb \pi}) &=& \sum_{t=0}^\infty \gamma^t R({\pmb \mu}_t, {\pmb \pi}),
\enqs
subject to $\mu_{t + 1}^{\alpha} = {\pmb \Phi}^{\alpha}(\mu_t^{\alpha}, \pi^{\alpha}), t \in \N_{+}$, $\mu_0^\alpha \equiv \mu$, and $\mu^{\alpha, W}_t$ given in \reff{notation:weighted_neighborhood}.
Since $\tilde{\pmb \pi}:= (\tilde\pi^{\alpha_m})_{m \in [M]} \in \widetilde{\pmb \Pi}_M$ can be viewed as a piecewise-constant projection of ${\pmb \pi} \in {\pmb \Pi}$ onto $\widetilde {\pmb \Pi}_M$.
Then,
\beqs
& & \sup_{{\pmb \pi} \in {\pmb \Pi}} \big|\tilde J^{M}(\mu, \pmb \pi) - J(\mu, {\pmb \pi})\big| \\
& \leq &  \sup_{{\pmb \pi} \in {\pmb \Pi}}  \sum_{t=0}^\infty \gamma^t \Big| \tilde R(\tilde {\pmb \mu}_t, \tilde{\pmb \pi})- R({\pmb \mu}_t, {\pmb \pi})\Big|\\
& \leq &  \sup_{{\pmb \pi} \in {\pmb \Pi}}  \sum_{t=0}^\infty \gamma^t \Big| \tilde R(\tilde {\pmb \mu}_t, \tilde {\pmb \pi})- R({\pmb \mu}_t, \tilde{\pmb \pi})\Big|+ \; \sup_{{\pmb \pi} \in {\pmb \Pi}}  \sum_{t=0}^\infty \gamma^t \Big| R({\pmb \mu}_t, \tilde {\pmb \pi})- R({\pmb \mu}_t, {\pmb \pi})\Big|\\
& := &  I + II.
\enqs
In terms of the term $I$, we first estimate $\Big| \tilde R(\tilde {\pmb \mu}_t, \tilde {\pmb \pi})- R({\pmb \mu}_t, \tilde{\pmb \pi})\Big|$:
\beqs
& & \Big| \tilde R(\tilde {\pmb \mu}_t, \tilde {\pmb \pi})- R({\pmb \mu}_t, \tilde{\pmb \pi})\Big|\\
&=& \biggl|\sum_{m=1}^M\int_{(\frac{m-1}{M}, \frac{m}{M}]}\sum_{s \in \Sc} \sum_{a \in \A} r(s, a, \tilde\mu^{\alpha_m, W}_t) \tilde\mu^{\alpha_m}_t(s) \tilde\pi^{\alpha_m}(a|s)d\alpha\\
& & \;\;\; - \sum_{m=1}^M \int_{(\frac{m-1}{M}, \frac{m}{M}]} \sum_{s \in \Sc} \sum_{a \in \A} r(s, a, \mu^{\alpha, W}_t) \mu^\alpha_t(s) \tilde\pi^{\alpha_m}(a|s)d\alpha\biggl|\\
&\leq& \biggl|\sum_{m=1}^M\int_{(\frac{m-1}{M}, \frac{m}{M}]}\sum_{s \in \Sc} \sum_{a \in \A} r(s, a, \tilde\mu^{\alpha_m, W}_t) \tilde\mu^{\alpha_m}_t(s) \tilde\pi^{\alpha_m}(a|s)d\alpha\\
& & \;\;\; - \sum_{m=1}^M \int_{(\frac{m-1}{M}, \frac{m}{M}]} \sum_{s \in \Sc} \sum_{a \in \A} r(s, a, \mu^{\alpha, W}_t) \tilde\mu^{\alpha_m}_t(s) \tilde\pi^{\alpha_m}(a|s)d\alpha\biggl|\\
& & +\; \biggl|\sum_{m=1}^M\int_{(\frac{m-1}{M}, \frac{m}{M}]}\sum_{s \in \Sc} \sum_{a \in \A} r(s, a, \mu^{\alpha, W}_t) \tilde\mu^{\alpha_m}_t(s) \tilde\pi^{\alpha_m}(a|s)d\alpha\\
& & \;\;\; - \sum_{m=1}^M \int_{(\frac{m-1}{M}, \frac{m}{M}]} \sum_{s \in \Sc} \sum_{a \in \A} r(s, a, \mu^{\alpha, W}_t) \mu^\alpha_t(s) \tilde\pi^{\alpha_m}(a|s)d\alpha\biggl|\\
& \leq & L_r \cdot \sum_{m=1}^M\int_{(\frac{m-1}{M}, \frac{m}{M}]}\|\mu^{\alpha, W}_t - \tilde\mu^{\alpha_m, W}_t\|_1 d\alpha +M_r \cdot  \sum_{m=1}^M\int_{(\frac{m-1}{M}, \frac{m}{M}]}\|\mu^{\alpha}_t - \tilde\mu^{\alpha_m}_t\|_1 d\alpha.
\enqs
By Lemma \ref{lemma:appendix_M_approximation},
\beqs
I \leq \frac{C(\gamma, L_\Pi, L_P, L_W, L_r, M_r)}{M}.
\enqs
For the term $II$,
\beqs
\sup_{{\pmb \pi} \in {\pmb \Pi}}  \sum_{t=0}^\infty \gamma^t \Big| R({\pmb \mu}_t, \tilde {\pmb \pi})- R({\pmb \mu}_t, {\pmb \pi})\Big| &\leq& \sup_{{\pmb \pi} \in {\pmb \Pi}}\sum_{t=0}^\infty \gamma^t M_r \sum_{m=1}^M\int_{(\frac{m-1}{M}, \frac{m}{M}]}\max_{s \in \Sc}\|\pi^\alpha - \pi^{\alpha^m}\|_1d\alpha\\
& \leq & \frac{L_{\Pi}M_r}{1-\gamma} \frac{1}{M}.
\enqs
\end{proof}

\begin{proof}[Proof of Theorem \ref{thm:discretized_GMFC_approximate_pareto_property}]
Suppose that $\tilde{\pmb \pi}^* \in \widetilde{\pmb \Pi}_M \subset {\pmb \Pi}$ and $(\pi^{1, *}, \ldots, \pi^{N, *}) \in \Pi^N$ are optimal policies of the problems \reff{eq:discretized_GMFC_reward} and \reff{eq:N_agent_reward}, respectively. From Proposition \ref{prop:discretizedGMFC_M_approximation}, for any $\varepsilon >0$, there exists sufficiently large $M_\varepsilon>0$
\beqs
 |\tilde J^{M_\varepsilon}(\mu, \tilde {\pmb \pi}^*) - J(\mu, {\tilde {\pmb \pi}^*})| \leq \frac{\varepsilon}{3},
\enqs
where by \reff{equ:relation_piN_pmbpi}, ${\pmb \pi}^{N, *}:= \sum_{i =1}^N \pi^{i, *} {\bf 1}_{\alpha \in (\frac{i-1}{N}, \frac{i}{N}]}$.\\
From Theorem \ref{thm:GMFC_approximate_pareto_property}, for any $\varepsilon > 0$, there exists $N_\varepsilon$ such that for all $N \geq N_\varepsilon$
\beqs
|J_N(\mu, {\tilde\pi}^{1, *}, \ldots, {\tilde\pi}^{N, *}) - J(\mu, {\tilde {\pmb \pi}^*})| \leq \frac{\varepsilon}{3}, \;\; |J_N(\mu, \pi^{1, *}, \ldots, \pi^{N, *}) - J(\mu, {{\pmb \pi}^{N, *}})| \leq \frac{\varepsilon}{3}.
\enqs
Then we have
\beqs
 & & J_N(\mu, {\tilde \pi}^{1, *}, \ldots, {\tilde \pi}^{N, *})-J_N({\mu}, {\pi}^{1, *}, \ldots, {\pi}^{N, *})\\
 &\geq & \underbrace{J_N(\mu, {\tilde \pi}^{1, *}, \ldots, {\tilde \pi}^{N, *}) - J(\mu, {\tilde {\pmb \pi}^*})}_{I_1} + \underbrace{J(\mu, {\tilde {\pmb \pi}^*}) - \tilde J_{M_\varepsilon}(\mu, \tilde{\pmb\pi}^*)}_{I_2}\\
  & & \;\;\; +\; \underbrace{\tilde J^{M_\varepsilon}(\mu, \tilde{\pmb\pi}^*) - \tilde J^{M_\varepsilon}(\mu, {\pmb \pi}^{N, *})}_{I_3} + \underbrace{\tilde J^{M_\varepsilon}(\mu, {\pmb \pi}^{N, *}) - J_N({\mu}, {\pi}^{1, *}, \ldots, {\pi}^{N, *})}_{I_4}\\
 & \geq & -\frac{\varepsilon}{3}-\frac{\varepsilon}{3}-\frac{\varepsilon}{3} = -\varepsilon.
\enqs
where $I_3 \geq 0$ due to the optimality of $\tilde {\pmb \pi}^*$ for $\tilde V^{M_\varepsilon}$. This means that the optimal policy of block GMFC provides an $\varepsilon$-optimal policy for the multi-agent system with $(\tilde\pi_1^*, \ldots, \tilde\pi_N^*): = \Gamma_N(\tilde {\pmb\pi}^*)$.
\end{proof}

\section{Experiments} \label{experiment}

In this section, we provide an empirical verification of our theoretical results, with two examples adapted from existing works on learning MFGs \cite{cui2021discrete, CLK2021} and learning GMFGs \cite{CK2021}.

\subsection{SIS Graphon Model}

We consider a SIS graphon model in \cite{cui2021discrete} under a cooperative setting. In this model, each agent $\alpha \in \mathcal{I}$ shares a state space $\Sc =\{S, I\}$ and an action space $\A=\{C, NC\}$, where $S$ is susceptible, $I$ is infected, $C$ represents keeping contact with others, and $NC$ means keeping social distance. The transition probability of each agent $\alpha$ is represented as follows
\beqs
P(s_{t+1}=I | s_t = S, a_t =C, \mu_t^{\alpha, W}) &=& \beta_1 \mu_t^{\alpha, W}(I),\\
P(s_{t+1}=I | s_t = S, a_t =NC, \mu_t^{\alpha, W}) &=& \beta_2 \mu_t^{\alpha, W}(I),\\
P(s_{t+1}=S | s_t = I, \mu_t^{\alpha, W}) &=& \delta,
\enqs
where $\beta_1$ is the infection rate with keeping contact with others, $\beta_2$ is the infection rate under social distance, and $\delta$ is the fixed recovery rate. We assume $0<\beta_2 <\beta_1$, meaning that keeping social distance can reduce the risk of being infected. The individual reward function is defined as
\beqs
r(s, \mu_t^{\alpha, W}, a)=-c_1 {\bf 1}_{\{I\}}(s) - c_2{\bf 1}_{\{NC\}}(a) - c_3{\bf 1}_{\{I\}}(s){\bf 1}_{\{C\}}(a),
\enqs
where $c_1$ represents the cost of being infected such as the cost of medical treatment, $c_2$ represents the cost of keeping social distance, and $c_3$ represents the penalty of going out if the agent is infected.

In our experiment, we set $\beta_1$=0.8, $\beta_2$=0, $\delta=0.3$ for the transition dynamics and $c_1$=2, $c_2$=0.3, $c_3=0.5$ for the reward function. The initial mean field $\mu_0$ is taken as the uniform distribution. We set the episode length to 50.

\subsection{Malware Spread Graphon Model}
We consider a malware spread model in \cite{CLK2021} under a cooperative setting. In this model, let $\Sc=\{0, 1, \ldots, K -1\}$, $K \in \N$, denote the health level of the agent, where $s_t=0$ and $s_t=K-1$  represents the best level and the worst level, respectively. All agents can take two actions: $a_t = 0$ means doing nothing, and $a_t=1$ means repairing. The state transition is given by
\begin{align*}
s_{t +1} = \left\{\begin{array}{cll} s_t + \lfloor (K-s_t)\chi_t \rfloor, \; & \mbox{if}\; a_t=0,\\
0,\; &\mbox{if}\;a_t=1,
\end{array}
\right.
\end{align*}
where $\chi_t, t \in \N$ are i.i.d. random variables with a certain probability distribution. Then after taking action $a_t$, agent $\alpha$ will receive an individual reward
\beqs
r(s_t, \mu_t^{\alpha, W}, a_t)=-(c_1+\langle \mu_t^{\alpha, W} \rangle)s_t/K - c_2 a_t.
\enqs
Here considering the heterogeneity of agents, we use $W(\alpha, \beta)$ to denote the {\it importance} effect of agent $\beta$ on agent $\alpha$. $\langle \mu_t^{\alpha, W}\rangle: = \int_{\beta \in \mathcal{I}}\sum_{s \in \Sc} s W(\alpha, \beta) \mu_t^\beta(s)d\beta$ is the risk of being infected by other agents and $c_2$ is the cost of taking action $a_t$.

In our experiment, we set $K$=3, $c_1$=0.3, and $c_2$=0.5. In addition, to stabilize the training of the RL agent, we fix $\chi_t$ to a static value, i.e., 0.7. In this model, we set the episode length to 10.

\subsection{Performance of N-agent GMFC on Multi-Agent System} 
For both models, we use PPO \cite{schulman2017proximal} to train the block GMFC agent in the infinite-agent environment and obtain the policy ensembles and further use Algorithm \ref{alg:1} to deploy them in the finite-agent environment. We test the performance of N-agent GMFC with 10 blocks to different numbers of agents, i.e., from 10 to 100. For each case, we run 1000 times of simulations and show the mean and standard variation (Green shadows in Figure \ref{figure_sis} and Figure \ref{figure_malware}) of the mean episode reward. We can see that in both scenarios and for different types of graphons,  the mean episode rewards of the N-agent GMFC become increasingly close to that of block GMFC as the number of agents grows. (See Figure \ref{figure_sis} and Figure \ref{figure_malware}). This verifies our theoretical findings empirically.

\begin{figure}[!htb]
    \centering

    \begin{subfigure}[b]{\textwidth}
        \centering
        \includegraphics[width=0.32\linewidth]{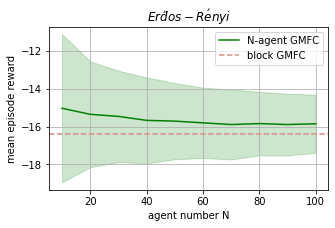}
        \hfill
        \includegraphics[width=0.32\linewidth]{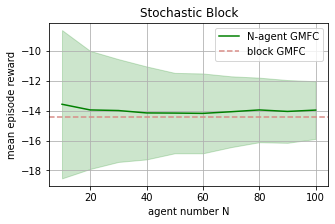}
        \hfill
        \includegraphics[width=0.32\linewidth]{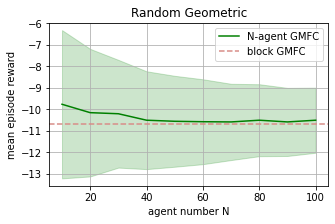}
    \end{subfigure}
    \caption{Experiments for different graphons in SIS finite-agent environment}
    \label{figure_sis}
\end{figure}

\begin{figure}[!htb]
    \centering

    \begin{subfigure}[b]{\textwidth}
        \centering
        \includegraphics[width=0.32\linewidth]{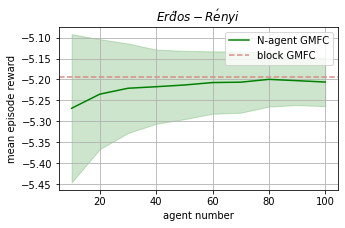}
        \hfill
        \includegraphics[width=0.32\linewidth]{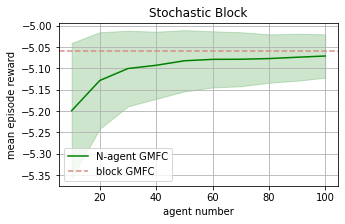}
        \hfill
        \includegraphics[width=0.32\linewidth]{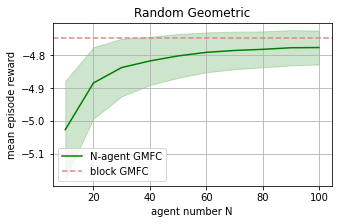}
    \end{subfigure}
    \caption{Experiments for different graphons in Malware Spread finite-agent environment} \label{figure_malware}
\end{figure}

\subsection{Comparison with Other Algorithms}
For different types of graphons, we compare our algorithm N-agent GMFC with three existing MARL algorithms, including two independent learning algorithms, i.e., independent DQN \cite{mnih2013playing}, independent PPO \cite{schulman2017proximal} and a powerful centralized-training-and-decentralized-execution(CTDE)-based algorithm QMIX \cite{rashid2018qmix}. We test the performance of those algorithms with different numbers of blocks, i.e., 2, 5, 10, to the multi-agent systems with 40 agents. The results are reported in Table \ref{table_sis} and Table \ref{table_malware}.   

In the SIS graphon model, N-agent GMFC shows dominating performance in most cases and outperforms independent algorithms by a large margin. Only QMIX can reach comparable results. And in the malware spread graphon model, N-agent GMFC outperforms other algorithms in more than half of the cases. Only independent DQN has comparable performance in this environment. And we can see that in both environments, the performance gap between N-agent GMFC and other MARL algorithms is shrinking as the number of blocks goes larger. This is mainly because the action space of block GMFC increases more quickly than MARL algorithms as the block number increases. And it is hard to train RL agents when the action space is too large.  

Beyond the visible results shown in Tables \ref{table_sis} and \ref{table_malware}, when the number of agents $N$ grows larger, classic MARL methods become infeasible because of the curse of dimensionality and the restriction of memory storage, while N-agent GMFC is trained only once and independent of the number of agents $N$, hence is easier to scale up in a large-scale regime and enjoys a more stable performance. We can see that N-agent GMFC shows more stable results when N increases as shown in Figure \ref{figure_sis} and Figure \ref{figure_malware}.

\begin{table}[H]
  \caption{Mean Episode Reward for SIS with 40 agents}
  \label{table_sis}
  \centering
  \begin{tabular}{cccccc}
    \toprule
    \multirow{2}{*}{Graphon Type} & \multirow{2}{*}{M} &  \multicolumn{4}{c}{Algorithm}    \\
    \cmidrule(r){3-6}
     & & N-agent GMFC    &  \hspace*{2mm} I-DQN \hspace*{2mm}   & \hspace*{2mm} I-PPO \hspace*{2mm}   & \hspace*{2mm} QMIX \hspace*{2mm} \\
    \midrule
    \multirow{3}{*}{Erd\H{o}s R\'enyi}  & 2 & \textbf{-15.37}  & -17.58 & -20.63 & -20.51  \\
     & 5  & \textbf{-15.74}  & -16.17  & -20.42 & 16.94 \\
     & 10 & -15.67 & -17.55 & -21.38  & \textbf{-14.45}   \\
    \multirow{3}{*}{Stochastic Block} & 2  & \textbf{-13.58} & -16.05 & -18.38  & -17.69  \\
     & 5  & \textbf{-13.67} & -15.91 & -20.13 & -13.79  \\
     & 10 & \textbf{-13.57} & -15.52 & -14.87 & -13.86   \\
    \multirow{3}{*}{Random Geometric } & 2  & \textbf{-12.45} & -17.93  & -14.82 & -14.52   \\
     & 5  & \textbf{-9.82} & -12.81  & -12.99 & -10.84  \\
     & 10 & \textbf{-10.52} & -11.68 & -12.66 & -12.60   \\

    \bottomrule
  \end{tabular}
\end{table}

\begin{table}[H]
  \caption{Mean Episode Reward for Malware Spread with 40 agents}
  \label{table_malware}
  \centering
  \begin{tabular}{cccccc}
    \toprule
    \multirow{2}{*}{Graphon Type} & \multirow{2}{*}{M} &  \multicolumn{4}{c}{Algorithm}    \\
    \cmidrule(r){3-6}
     & & N-agent GMFC    &  \hspace*{2mm} I-DQN \hspace*{2mm}   & \hspace*{2mm} I-PPO \hspace*{2mm}   & \hspace*{2mm} QMIX \hspace*{2mm}\\
    \midrule
    \multirow{3}{*}{Erd\H{o}s R\'enyi}  & 2 & -5.21 &\textbf{-5.11}  & -5.31 & -6.05   \\
     & 5  & \textbf{-5.21} &-5.30  & -5.26 & -6.13    \\
     & 10 & -5.21 &\textbf{-5.14} & -5.27 & -5.21     \\
    \multirow{3}{*}{Stochastic Block} & 2  & \textbf{-5.16} & -5.21  & -5.37 & -5.88   \\
     & 5  & \textbf{-5.10} & -5.19  & -5.31 & -5.70    \\
     & 10 & -5.09 &\textbf{-5.05}  & -5.28 & -5.27    \\
    \multirow{3}{*}{Random Geometric } & 2 & \textbf{-5.02} & -5.21  & -5.27 & -5.35    \\
     & 5  & \textbf{-4.85} & -5.03  & -5.04 & -5.05    \\
     & 10 & \textbf{-4.82} & -4.83  & -5.14 & -4.83    \\

    \bottomrule
  \end{tabular}
\end{table}

\subsection{Implementation Details}
 We use three graphons in our experiments: (1) Erd\H{o}s R\'enyi: $W(\alpha,\beta)=0.8$; (2) Stochastic block model: $ W(\alpha, \beta) = 0.9$, if $0 \leqslant \alpha, \beta \leqslant 0.5$ or $ 0.5 \leqslant \alpha, \beta \leqslant 1$, $W(\alpha, \beta) = 0.4$, otherwise; (3) {Random geometric graphon: $W(\alpha, \beta) = f(\min(|\beta - \alpha|, 1 - |\beta - \alpha|))$, where $f(x) = \mathrm{e}^{- \frac{x}{0.5-x} }$.}

For the RL algorithms, we use the implementation of RLlib \cite{liang2018rllib} (version 1.11.0, Apache-2.0 license). For PPO used to learn an optimal policy ensemble in block GFMC, we use a 64-dimensional  linear layer to encode the observation and 2-layer MLPs with 256 hidden units per layer for both value network and actor network. For independent DQN and independent PPO, we use the default weight-sharing model with 64-dimensional embedding layers. We train the GMFC PPO agent for 1000 iterations, and other three MARL agents for 200 iterations. The specific hyper-parameters are listed in Table \ref{alg_setting}.

\begin{table}[!htb]
		\begin{center}
			\caption{RL Algorithm Settings }
			\setlength{\tabcolsep}{3mm}{
				\begin{tabular}{l c c c c}
					\toprule  
					\textbf{Algorithms} & GMFC PPO & I-DQN & I-PPO & QMIX  \\
					\midrule  
		            Learning rate & 0.0005 & 0.0005 & 0.0001 & 0.00005 \\
					Learning rate decay & True & True & True & False \\
					Discount factor & 0.95 & 0.95 & 0.95 & 0.95\\
					Batch size & 128 & 128 & 128 & 128\\
					KL coefficient & 0.2 & - & 0.2 & - \\
					KL target & 0.01 & - & 0.01 & - \\
					Buffer size & - & 2000 & - & 2000 \\
					Target network update frequency & - & 2000 & - & 1000 \\
					\bottomrule 
				\end{tabular}
				\label{alg_setting}}
				\vspace{-.3cm}
		\end{center}
	\end{table}

\section{Conclusion}
In this work, we have proposed a discrete-time GMFC framework for MARL with nonuniform interactions on dense graphs. Theoretically, we have shown that under suitable assumptions, GMFC approximates MARL well with approximation error of order $\mathcal{O}(\frac{1}{\sqrt{N}})$. To reduce the dimension of GMFC, we have introduced block GMFC by discretizing the graphon index and shown that it also approximates MARL well. Empirical studies on several examples have verified the plausibility of the GMFC framework. For future research, we are interested in establishing theoretical guarantees of the PPO-based algorithm for block GMFC, learning the graph structure of MARL and extending our framework to MARL with nonuniform interactions on sparse graphs.

\bibliographystyle{elsarticle-harv}
\bibliography{refs}



\end{document}